
\documentclass[journal]{IEEEtran}
\ifCLASSINFOpdf
\else
\fi
%
%

\usepackage{graphicx}
\usepackage{epstopdf}
\usepackage{epsfig}

\usepackage{cite}
\usepackage{amsmath,amssymb,amsfonts}
\usepackage{algorithmic}
\usepackage{textcomp}
\usepackage[latin1]{inputenc}
\usepackage{amsmath}
\usepackage{amsfonts}
\usepackage{amssymb}
\usepackage{makeidx}

\usepackage{mathrsfs}

\usepackage{lipsum}
\usepackage{flushend}
\usepackage{cuted}
\usepackage{mathtools}

\usepackage{CJK}
\usepackage{type1cm}
\usepackage{times}
\usepackage{footmisc}
\usepackage{cite}
\usepackage{dblfloatfix}

\usepackage{algorithm}
\usepackage{algorithmic}

\usepackage[latin1]{inputenc}


\usepackage{amsthm}

\newtheorem{theorem}{Theorem}

\newtheorem{lemma}{Lemma}

\newtheorem{problem}{Problem}
\newtheorem{proposition}{Proposition}
\newtheorem{remark}{Remark}

\DeclareMathOperator*{\ve}{{\rm vec}}
\DeclareMathOperator*{\veh}{\rm vech}
\DeclareMathOperator*{\Vex}{\rm Vex}

\hyphenation{op-tical net-works semi-conduc-tor}

\begin{document}
	
	\title{Stochastic Event-based Sensor Schedules for Remote State Estimation in Cognitive Radio Sensor Networks}
	\author{Lingying Huang,
		Jiazheng Wang,
		Enoch Kung, 
		Yilin Mo, 
		Junfeng Wu and
		Ling Shi
		\thanks{This paper was accepted by IEEE Transaction on Automatic Control. It was published as an early access on July 7 2020 with the link: https://ieeexplore.ieee.org/document/9134874.}
		\thanks{L. Huang, J. Wang and L. Shi are with the Department of Electronic and Computer Engineering, Hong Kong University of Science and Technology, Clear Water Bay, Kowloon, Hong Kong (e-mail: lhuangaq@connect.ust.hk, jwangck@connect.ust.hk, eesling@ust.hk). The work by L. Huang, J. Wang  and L. Shi is supported by a Hong Kong RGC General Research Fund 16204218.}
		\thanks{E. Kung is the corresponding author with the Department of Mathematics, University College London, London,  Greater London, United Kingdom (email: e.kung@ucl.ac.uk). }
		\thanks{Y. Mo is with the Department of Automation and BNRist, Tsinghua University, Beijing, China (email: ylmo@tsinghua.edu.cn). The work by Y. Mo is supported by the National Key Research and Development Program of China under Grant 2018AAA0101601.}
		\thanks{J. Wu is with the College of Control Science and Engineering, Zhejiang University, Hangzhou, China (email: jfwu@zju.edu.cn).The work by J. Wu is supported by Natural Science Foundation of China under NSFC 61790571.}}
	
	\maketitle
	
	\begin{abstract}
		We consider the problem of communication allocation for remote state estimation in a cognitive radio sensor network~(CRSN). A sensor collects measurements of a physical plant, and transmits the data to a remote estimator as a secondary user (SU) in the shared network. The existence of the primal users (PUs) brings exogenous uncertainties into the transmission scheduling process, and how to design an event-based scheduling scheme considering these uncertainties has not been addressed in the literature. In this work, we start from the formulation of a discrete-time remote estimation process in the CRSN, and then analyze the hidden information contained in the absence of data transmission. In order to achieve a better tradeoff between estimation performance and communication consumption, we propose both open-loop and closed-loop schedules using the hidden information under a Bayesian setting. The open-loop schedule does not rely on any feedback signal but only works for stable plants. For unstable plants, a closed-loop schedule is designed based on feedback signals. The parameter design problems in both schedules are efficiently solved by convex programming. Numerical simulations are included to illustrate the theoretical results.
	\end{abstract}
	
	\begin{IEEEkeywords}
		Stochastic event-based schedule;  Cognitive radio sensor network; Minimum mean squared error; Branch-and-bound algorithm.
	\end{IEEEkeywords}
	
	\section{Introduction}
Recently,  cognitive ratio (CR) which \emph{dynamically} assigns the radio resources is applied in 5G Internet of things (IoT) applications\cite{hasegawa2014optimization}. 
CR, first proposed by Mitola et al.~\cite{mitola1999cognitive} in 1999, is a promising technology to cope with the spectrum scarcity problem. A CR sensor network~(CRSN) is a network of dispersed wireless sensor nodes embedded with cognitive radio capability	which enables them to dynamically access unused licensed spectrum bands for data transmission while	performing conventional wireless sensor nodes' tasks \cite{akan2009cognitive}. An example of that is shown in Fig. \ref{CRSN}. If the primary users (PUs), as the licensed user (mobile phone), vacate the spectrum, secondary users (SUs), e.g., the sink, equipped with CR devices can then  access the spectrum to transmit packets~\cite{kakalou2017cognitive}. Minimizing the communication rate of the SU while satisfying the estimation performance is worth studying in this shared network. 

Proper sensor scheduling, which is introduced to cope with limited transmissions, could improve estimation quality. The use of online information in event-based mechanisms to outperform off-line mechanisms ~\cite{yang2011deterministic,shi2011sensor,mo2014infinite}, in terms of estimation quality, has attracted increasing attention in recent years. Astrom and Bernhardsson~\cite{astrom2002comparison} first showed that an event-based approach outperforms a periodic approach (Riemann sampling) in a first-order stochastic system. 
The event-triggered mechanisms proposed by Xia et al.~\cite{xia2017networked} and Trimpe et al.~\cite{trimpe2014stability} require that the sensor has a computational capability to run a local Kalman filter and obtain a local state estimate. 
In realistic scenarios, however, the sensors may be primitive and have limited computational capability. Based on that condition, Wu et al.~\cite{wu2013event} derived a minimum mean squared error (MMSE) estimate on the remote estimator under a deterministic event-triggered scheduler. Since finding the exact MMSE estimate is intractable due to the computational complexity, an approximated estimator based on a Gaussian assumption is further derived. To preserve the Gaussian property, stochastic event-triggered sensor schedulers are proposed by Han et al.~\cite{han2015stochastic}. 

Different from traditional studies, in which the radio access network is \emph{statically} assigned, the existence of PUs introduces an exogenous uncertainty to the SU base scheduling scheme. There is a limited amount of works on optimizing the scheduling scheme of CRSNs. Deng et al.~\cite{deng2011energy} studied how to activate successively non-disjoint sensor groups to extend the network lifetime. Mabrouk et al.~\cite{mabrouk2014oticor} introduced opportunistic time slot assignment scheduling scheme to minimize the schedule length and maximize the throughput. All the above setups consider continuous-time measurements of the SU. Minimizing the transmission collision from a probabilistic point of view is important since it is very energy-consuming or even impossible to check the spectrum availability continuously. Moreover, the above studies neglect the information's importance? 

In this paper, we consider a discrete-time remote estimation process in a CRSN. Unlike previous studies, the SU can check the spectrum availability before each transmission. Moreover, we use an event-triggered mechanism to capture the information's importance. To the best of our knowledge, an event-based mechanism for remote state estimation has not been studied in this new but widely-used network structure.
	
	\begin{figure}[!t]
		\centering
		\includegraphics[width=0.9\columnwidth]{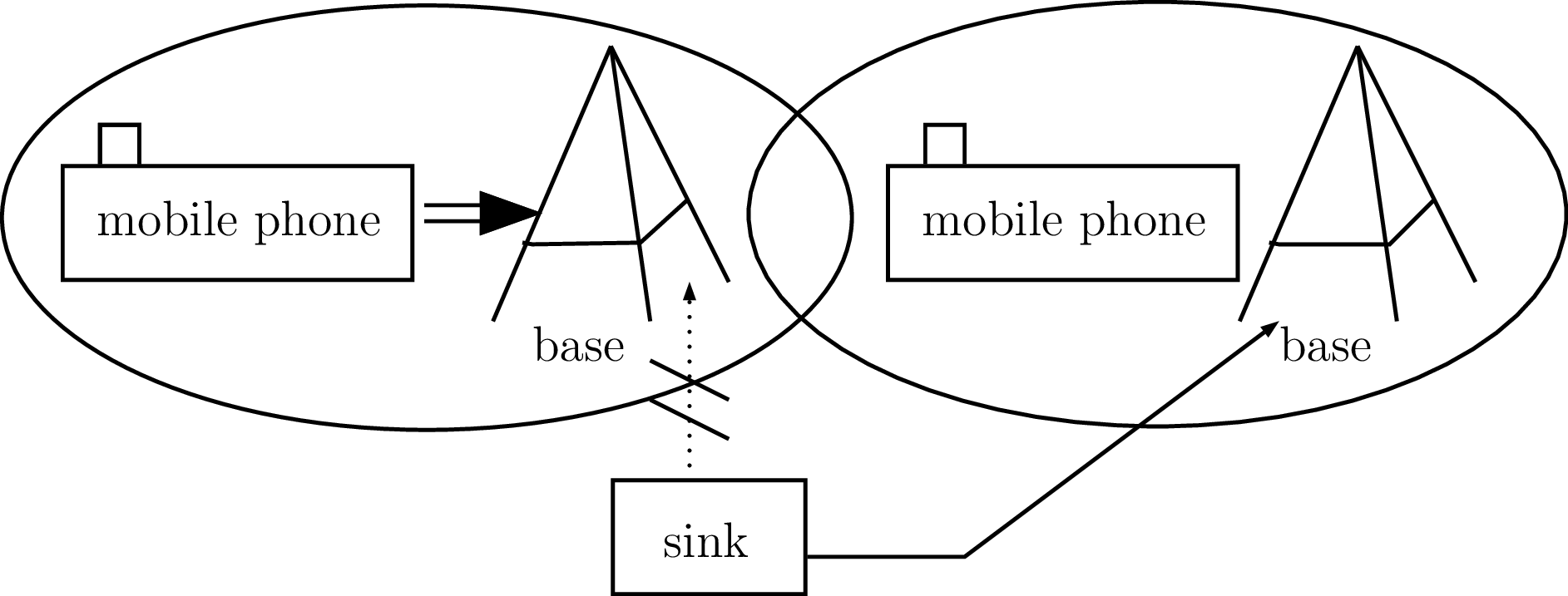}	
		\caption{Topology of a typical CRSN.(In this topology, mobile phone is a PU and it could get the access to the network whenever it has a packet to transmit. The sink, equipped with CR devices, is a SU base station and it could only access the spectrum when the PU vacates it.)}
		\label{CRSN}
	\end{figure}
	
The big challenge is that the exogenous uncertainty in the shared network in addition to the stochastic property of triggering law makes the uncertainties coupled. Kung et al.~\cite{kung2017nonexistence} showed that the Gaussian property cannot be preserved due to coupled uncertainty induced by packet drops. Xu et al.~\cite{xu2019remote} utilizes a Gaussian mixture model to obtain a closed-form MMSE estimator for the packet-dropping scenarios; however, the computational complexity grows exponentially. To cope with this challenge, we utilize the hidden information contained in the absence of transmission data in the CRSN to decouple those uncertainties. To be more specific, since the remote estimator can distinguish the source of the received packet, not triggering is inferred when no packet is received. Another problem is that, the exogenous uncertainty makes the error covariance random, then  deriving the error covariance bounds is highly non-trivial. The mean error covariance and its bounds under certain communication rate are first analyzed with a centralized base station collecting the different SUs' measurements within the channel coverage. This result can be used to analyze multiple SUs without a centralized base station as a future work, which is of great importance for 5G IoT. The main contributions of this work are summarized as follows.
\begin{enumerate}	
	\item  \textbf{Exogenous Uncertainty Model.}	 The novelty of the formulation is  taking into account the uncertain access to the network.   When $ \lambda =1$, Section $ \rm \uppercase\expandafter{\romannumeral3} $ recovers the work in~\cite{han2015stochastic}.
	\item \textbf{MMSE Estimator and Performance Bounds.} We derive the MMSE estimator for both the open-loop and closed-loop schedulers under this new model~(\textbf{Theorem~\ref{t1},~\ref{t3}}). Moreover, the (asymptotic) upper and lower bounds on the mean error  covariance are characterized~(\textbf{Lemma \ref{l2}, \ref{l5}, \ref{l7}}).
	\item \textbf{Offline Parameter Optimization and Global Solution.} A semi-definite programming (SDP) problem considering the effect of $ Q $ and $ R $ is provided for designing the sub-optimal event-triggered parameter  in the open-loop scenario (\textbf{Theorem~\ref{t2}}) and the gap is analyzed. For the closed-loop scenario, a \textit{jointly constrained biconvex} problem (\textbf{Theorem~\ref{t4}}) is derived. Furthermore, we analyze the compact set of the feasible region (\textbf{Lemma \ref{ll7}}) to ensure the boundary solutions (\textbf{Lemma~\ref{lb}}). A \textit{branch-and-bound} method (\textbf{Algorithm \ref{al2}, \ref{ag1}}) is introduced to obtain the global solution.
\end{enumerate}

	The remainder of this paper is organized as follows. The system structure and problem formulation are shown in Section \ref{s2}. The performance and the optimization problem in the open-loop case are analyzed in Section \ref{s3}. The closed-loop scenario is studied in Section \ref{s6}, where the jointly constrained biconvex programming and the branch-and-bound method are introduced. Numerical examples are given in Section \ref{s7}. Conclusions are summarized in Section \ref{sc}  and some proofs are attached in the Appendix.

	\textit{Notations: }$ \mathbb{N} $ is the set of natural numbers. $ \mathbb{R} $ and $ \mathbb{R}^{n} $ represent the set of real numbers and $ n- $dimensional column vectors, respectively.  When a matrix $ X $ is  $ n\times n  $ positive semi-definite (definite), we simply write $ X\geq0 $ ($ X>0  $) and $ X\in \mathbb{S}_{+}^{n}(\mathbb{S}_{++}^{n})
	$. For any matrix $ X $, $ {\rm tr}(X) $ and $ X^{\top} $ are its trace and the transpose, and $ \rho(X) $ is the spectrum radius of $ X $.  When  $ 	X>0$,  we use $ \ve(X)\in\mathbb{R}^{n^{2}}  $($  \veh(X)\in\mathbb{R}^{\frac{n(n + 1)}{2}}  $) to denote the  vectorization (half-vectorization) of $ X $.	The identity matrix is $ I $ and its size is determined from the content.
	$ \Pr(\cdot) $ and $ \Pr(\cdot|\cdot) $ stand for the probability and conditional probability. $ \mathbb{E}[\cdot] $ denotes the expectation of a random variable. $ f(x|y) $ denotes the probability density function (pdf) of a random variable (r.v.) $ x $ conditional on a random variable $ y $. $ \mathcal{N}(\mu,\Sigma ) $ denotes a Gaussian distribution with mean $ \mu $ and covariance matrix $ \Sigma $. The sequence $ \{\eta_{0},\eta_{1},\ldots,\eta_{k}\} $ is simplified as $ \{\eta_{k}\}_{0}^{k} $. We denote $ \{\eta_{0},\eta_{1},\ldots\} $ as $ \{\eta_{k}\}_{0}^{\infty} $. For a compact set $ \Omega $, $ \partial\Omega $ is the boundary of $ \Omega $.
	
	\section{Problem Formulation}\label{s2}
	\subsection{System structure}
	\begin{figure}[!t]
		\centering
		\includegraphics[width=0.9\columnwidth]{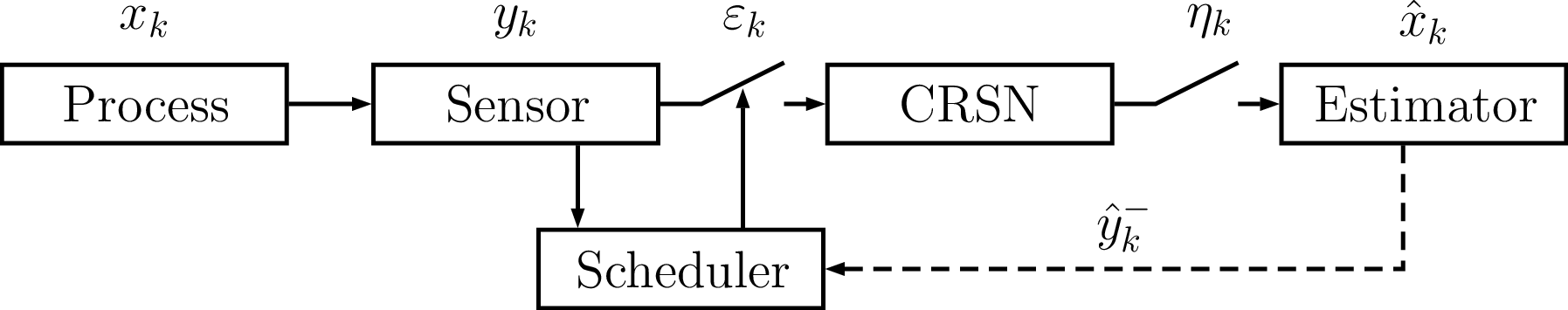}
		\caption{System structure.}
		\label{f1}
	\end{figure} 
	The system structure is shown in Fig. \ref{f1}.  		
	Consider a discrete-time linear time-invariant (LTI) process: 
	\begin{equation}\label{eq1}
	\begin{split}
	x_{k+1}&=Ax_{k}+w_{k},\\
	y_{k}&=Cx_{k}+v_{k},
	\end{split}
	\end{equation}
	where $ x_{k}\in \mathbb{R}^{n} $ is the system state, $  y_{k} \in{\mathbb{R}^{m}}  $ is the measurement vector taken by the sensor at
	time $ k $, and $ w_{k} \in \mathbb{R}^{n} $ and $ v_{k} \in \mathbb{R}^{m} $ are two independent identically distributed  (i.i.d.) zero-mean Gaussian random
	noises with covariances $ Q \geq 0$ and $ R > 0 $, respectively.
	The initial state $  x_{0} $ is a zero-mean Gaussian r.v. that
	is uncorrelated with $ w_{k} $ or $ v_{k} $ and has covariance $ \Pi_{0} \geq 0 $.   We 
	assume that $( A , \sqrt{Q} )$ is stabilizable and $(A, C)$ is detectable. 
	
	A sensor is equipped with a pre-designed event-based scheduler deciding whether to send sensor's measurement or not. Let  $ \varepsilon_{k}\in\{0,1\} $ denote the decision variable. If $ \varepsilon_k=1 $, $ y_k $ is sent; otherwise it is not sent. We consider the remote estimation problem where the sensor output is transmitted to the estimator via a CRSN. The CRSN consists of $ M $ PUs, where each PU can access one channel. When the PU is absent, a SU base station collecting all the SUs' states within its coverage could send information to the remote estimator\cite{chen2010power}. Without loss of generality, we  use the aggregated $ x_{k} $ and $ y_{k} $, thus we only need to study one SU case. Let $ \eta_{k}\in\{0,1\} $ represent the channel's availability. If $  \eta_{k}=0 $, the channel is occupied by the PU, and vice versa.
	We assume $ \eta_{k} $ evolves as an i.i.d. Bernoulli random process with $ \mathbb{E}[\eta_{k}]=\lambda\in(0,1] $, which is widely used in  \cite{saleem2014primary, ganti2002tranmission,ganti2007optimal,banaei2009throughput,gambini2008packet}.
	
	The remote estimator can identify the packet source. If the remote estimator receives no packet, it means that the channel condition is idle for the sensor to transmit packet but the scheduler decides not to send, i.e., $ \eta_{k}=1$ and $ \varepsilon_{k}=0 $. If the remote estimator receives a packet from other sensors, the channel condition is unfavorable at this time step, i.e., $ \eta_{k}=0$. In this case, there is no information of $ \varepsilon_{k} $, and we can set it as $ \varepsilon_{k}=\emptyset $. Otherwise, the remote estimator receives the packet from this sensor, i.e., $ \eta_{k}=1$ and $ \varepsilon_{k}=1$. The following information is available to the estimator at time $ k $
	\begin{align*}
	\mathcal{I}_{k}\triangleq\{\eta_{k}\}_{0}^{k}\cup\{\varepsilon_{k}\}_{0}^{k}\cup\{\eta_{k}\varepsilon_{k}y_{k}\}_{0}^{k},
	\end{align*} 
	with $ \mathcal{I}_{-1}=\emptyset $.  
	Further define the following notations which will be used in subsequent analysis:
	\begin{align*}
	&\hat{x}_{k}^{-}	\triangleq	\mathbb{E}[x_k | \mathcal{I}_{k-1}]	,\hat{y}_{k}^{-}	\triangleq	\mathbb{E}[y_k | \mathcal{I}_{k-1}],\\
	&e_{k}^{-}	\triangleq	x_{k}-	\hat{x}_{k}^{-},	
	P_{k}^{-}		\triangleq		\mathbb{E}[	e_{k}^{-}e_{k}^{-\top}],\\
	&\hat{x}_{k}\triangleq	\mathbb{E}[x_k | \mathcal{I}_{k}]	,e_{k}	\triangleq	x_{k}-	\hat{x}_{k}, 	
	P_{k}		\triangleq		\mathbb{E}[	e_{k}	e_{k}^{\top}].
	\end{align*}
	The estimates $ \hat{x}_{k}^{-} $ and $ \hat{x}_{k} $ are the \textit{a priori} and the \textit{a posteriori} MMSE state estimate, respectively. Meanwhile,  $ P_{k}^{-} $ and $ P_{k} $ are the \textit{a priori} and the \textit{a posteriori} estimation error covariance, respectively. Similarly, $ \hat{y}_{k}^{-}  $ denotes the \textit{a priori} MMSE measurement estimate.
	
	We adopt the stochastic event-triggered scheduling schemes in~\cite{han2015stochastic} as below.  At each time step,	the sensor generates an i.i.d. random variable $ \zeta_{k} $ which is uniformly distributed over $ [0,1] $, denoted as $ \zeta_{k}\sim U(0,1) $. The transmission decision by the sensor, i.e., $ \varepsilon_{k} $ follows two event-triggered criteria.
	\begin{enumerate}
		\item  Open-loop scheduler: The sensor makes the decision based on the current raw measurement $ y_{k} $, i.e.,
		\begin{equation}\label{eq2}
		\varepsilon_{k} =\left\lbrace \begin{array}{l}
		1,\text{ if } \zeta_{k}> exp(-\frac{1}{2}y_{k}^{\top}Yy_{k}), Y>0,\\ 
		0,\text{ otherwise} .
		\end{array} \right. 
		\end{equation}	
		\item Closed-loop scheduler: The sensor receives a feedback $ \hat{y}_{k}^{-} $ from the remote estimator; then the decision is based on the measurement innovation $ z_{k}\triangleq y_{k}-\hat{y}_{k}^{-} $ as
		\begin{equation}\label{eq62}
		\varepsilon_{k} =\left\lbrace \begin{array}{l}
		1,\text{ if } \zeta_{k}> exp(-\frac{1}{2}z_{k}^{\top}Zz_{k}), Z>0,\\ 
		0,\text{ otherwise} .
		\end{array} \right. 
		\end{equation}
	\end{enumerate}
	The open-loop scheduler is easier to implement since it does not require any feedback. However, open-loop schedulers cannot reduce the communication rate for unstable systems since $ \varepsilon_{k}=1 $ almost surely occurs for any given $ Y $ after a long time\cite{han2015stochastic}. Thus we need closed-loop schedulers to reduce the communication rate for unstable systems.
	\begin{remark}
		We choose these schedulers because they preserve the Gaussian property which will be exploited in Theorem \ref{t1} to obtain the linear recursion of update. This refrains from nonlinear complicated and approximate estimation using the other existing event-triggered mechanism, e.g., \cite{wu2013event} and \cite{wang2012asynchronous}. 
	\end{remark}
	
	\subsection{Problem of Interest}
	Define the average  communication rate  as 
	\begin{equation}\label{eq4}
	\gamma\triangleq\limsup\limits_{N\rightarrow\infty}\dfrac{1}{N}\sum\limits_{k=0}^{N-1}\mathbb{E}[\eta_{k}\varepsilon_{k}]=\lambda\limsup\limits_{N\rightarrow\infty}\dfrac{1}{N}\sum\limits_{k=0}^{N-1}\mathbb{E}[\varepsilon_{k}].
	\end{equation}
	Since the sequence $ \{\eta_{k}\}_{0}^{\infty} $ has no relationship with the measurement, the iteration of the error covariance is stochastic and cannot be determined offline. Therefore, we are interested in its statistical properties. Define the mean error  covariance of the system at time $ k $ as
	$ \mathbb{E}[P_{k}^{-}] $. 
	We are interested in the following problem:
	\begin{problem}\label{pi}
		\begin{equation*}
		\begin{split}
		\min\, &\gamma \\
		{\rm  s.t.} &\mathbb{E}[P_{k}^{-}]\leq M, 		
		\end{split}		
		\end{equation*}
		where $ M>0 $ is a given matrix-valued bound.
	\end{problem}
	We study two extreme cases to demonstrate that the event-triggered parameter influences $ \gamma  $ and $ \mathbb{E}[P_{k}^{-}] $. Note that if $ Y=0 $, $ \varepsilon_{k}=0 $ almost surely occurs. Therefore, if $ \gamma=0 $, the mean error covariance of the remote estimator diverges for an unstable system, i.e., $ \mathbb{E}[P_{k}^{-}]\rightarrow\infty $. On the other hand, for sufficiently large $ Y $ such that $ exp(-\frac{1}{2}y_{k}^{\top}Yy_{k})=0 $ almost surely occurs, we have $ \gamma=\lambda $. The  error covariance converges if $ \lambda>1-1/\rho(A)^{2} $~\cite{sinopoli2004kalman}. The same analysis applies to $ Z $. To avoid trivial problems, we assume this condition is satisfied in the following analysis.
	
	It is obvious that the parameter $ Y $ or $ Z $ introduces an additional degree of freedom to balance the tradeoff between the communication rate and the mean error covariance. However, it is difficult to solve Problem \ref{pi} directly since both the objective and constraint are implicit functions of $Y$ or $Z$.  One core problem lies in whether we are able to obtain the explicit expression of the communication rate and the mean error covariance in terms of $ Y $ or $ Z $. If not, we expect to find some bounds of the mean error covariance. In this paper, we will focus on the derivation of the communication rate and the mean error covariance in terms of $Y$ or $Z$ and then design the parameters to achieve a desired tradeoff. Besides, we will also explore an explicit MMSE estimator since it is also critical for the system implementation.

	\section{Open-Loop Scenario} \label{s3}
	The case using the open-loop schedulers is called the open-loop scenario. In the open-loop scenario, the main difficulty is that, due to the randomness of $ \{\eta_{k}\}_{0}^{\infty} $, the error covariance is stochastic and cannot be determined a priori. Only the mean error covariance is deduced.  It is difficult to analyze the iterative behavior of the mean error covariance  because of  the nonlinearity of the error covariance's recursion function. The influence of the event-triggered parameter on the mean error covariance is analyzed in this section.
	
	Since the open-loop scheduler \eqref{eq2} only reduces the communication rate for stable systems as Section \ref{s2}.A, we study the stable system in the open-loop case. We assume in the sequel that the system has already entered into the steady state, which implies that 
	\begin{equation*}
	P_{0}^{-}=Cov(x_{k})=\Sigma,Cov(y_{k})=\Pi,
	\end{equation*}  
	where $ \Sigma=A\Sigma A^{\top}+Q $, $ \Pi=C\Sigma C^{\top}+R $. 
	
	Given $ Y $, the average  communication rate~\cite{han2015stochastic} is 
	\begin{equation}\label{eq16}
	\begin{split}
	\gamma=\lambda\left( 1-(\det(I+\Pi Y))^{-\frac{1}{2}}\right) .
	\end{split}			
	\end{equation}
	
	Define functions $ h $, $ g_{\theta,W}:\mathbb{S}_{+}^{n}\rightarrow \mathbb{S}_{+}^{n}$ as follows:
	\begin{equation*}
	\begin{split}
	h(X)\triangleq& AXA^{\top}+Q, \\
	g_{\theta,W}(X)\triangleq &AXA^{\top}+Q-\theta AXC^{\top}(CXC^{\top}+W)^{-1}CXA^{\top}	,
	\end{split}	
	\end{equation*}
	where $ X>0 $,  $ W>0 $ and $ \theta\in(0,1] $. The function $ h $ can be interpreted as the recursive function of the estimation error covariance matrix when the channel is not available while the function $ g $ is the modified algebraic Riccati equation for the Kalman filter with intermittent observations \cite{sinopoli2004kalman}. If $ \theta=1 $, $ g_{1,W}$ will be written as $ g_{W}$ for brevity. 
	The propositions of function $ g_{\theta,W}(X) $ are shown in Appendix \ref{a2}.
	
	\begin{theorem}\label{t1}
		The MMSE estimate under an open-loop scheduler is computed as follows. 	
		Start from the initial condition $ \hat{x}_{0}^{-}=0 $ and $ P_{0}^{-}=\Sigma $.\\		
		Measurement Update:
		\begin{equation} \label{eq7}
		\begin{split} 	
		K_k		&=	\eta_{k}P_{k}^{-}C^{\top}\big(CP_{k}^{-}C^{\top} + R+(1-\varepsilon_{k})Y^{-1}\big)^{-1} 	\\
		\hat{x}_{k}	& =\hat{x}_{k}^{-}+\varepsilon_{k}K_ky_{k}-K_{k}\hat{y}_{k}^{-}=(I-K_{k}C)\hat{x}_{k}^{-}+\varepsilon_{k}K_ky_{k}, 	\\
		P_{k}			&=	P_{k}^{-} -K_k	C P_{k}^{-},
		\end{split}
		\end{equation} 		
		Time Update:
		\begin{equation} \label{q6}
		\begin{split}
		\hat{x}_{k+1}^{-}	=	A\hat{x}_{k},
		P_{k+1}^{-}		=	h(P_{k}).
		\end{split}
		\end{equation}
	\end{theorem}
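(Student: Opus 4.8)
The plan is to proceed by induction on $k$, taking as the inductive hypothesis the \emph{Gaussian preservation} property: that the \textit{a priori} conditional law $f(x_k \mid \mathcal{I}_{k-1})$ is Gaussian, equal to $\mathcal{N}(\hat{x}_k^-, P_k^-)$. The base case holds by the steady-state assumption, since $x_0 \sim \mathcal{N}(0,\Sigma)$ and $\mathcal{I}_{-1} = \emptyset$ give $\hat{x}_0^- = 0$ and $P_0^- = \Sigma$. Granting the hypothesis at time $k$, I would first derive the measurement-update equations by computing $f(x_k \mid \mathcal{I}_k)$, and then obtain the time-update equations by propagating through the dynamics; the latter is routine, so the content lies in the measurement update. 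The new information in $\mathcal{I}_k$ relative to $\mathcal{I}_{k-1}$ is the triple $(\eta_k, \varepsilon_k, \eta_k \varepsilon_k y_k)$, and I would split into three mutually exclusive cases.

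When $\eta_k = 0$, the channel is occupied and the only new datum is the value of $\eta_k$, which is exogenous and independent of $x_k$ and of the measurement history; hence the conditional law is unchanged, consistent with the factor $\eta_k$ in the gain forcing $K_k = 0$. When $\eta_k = 1$ and $\varepsilon_k = 1$, the exact value of $y_k$ is received; since $\varepsilon_k$ is a function of $y_k$ and of $\zeta_k$ with $\zeta_k$ independent of $x_k$, conditioning additionally on $\varepsilon_k = 1$ adds nothing once $y_k$ is known, so $f(x_k \mid \mathcal{I}_k) = f(x_k \mid \mathcal{I}_{k-1}, y_k)$ is the ordinary Gaussian measurement update with gain $P_k^- C^\top (C P_k^- C^\top + R)^{-1}$, matching the formula at $(1-\varepsilon_k) = 0$.

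The technical heart, and the step I expect to be the main obstacle, is the case $\eta_k = 1$, $\varepsilon_k = 0$: the channel is idle yet no packet is sent, so the \emph{absence} of data is itself informative. Here I would apply Bayes' rule, $f(x_k \mid \mathcal{I}_{k-1}, \varepsilon_k = 0) \propto \Pr(\varepsilon_k = 0 \mid x_k)\, f(x_k \mid \mathcal{I}_{k-1})$, and evaluate the likelihood by marginalizing the triggering rule \eqref{eq2} over the measurement noise: with $y_k = C x_k + v_k$ and $v_k \sim \mathcal{N}(0,R)$, I have $\Pr(\varepsilon_k = 0 \mid x_k) = \mathbb{E}_{v_k}[\exp(-\tfrac{1}{2} y_k^\top Y y_k)]$. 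Completing the square in this Gaussian integral yields, up to a constant independent of $x_k$, a factor $\exp(-\tfrac{1}{2}(C x_k)^\top (Y - Y(Y + R^{-1})^{-1} Y)(C x_k))$, and the push-through (matrix inversion) identity simplifies the inner matrix to $(R + Y^{-1})^{-1}$. This shows that not triggering acts exactly like a \emph{virtual measurement} equal to $0$ with inflated noise covariance $R + Y^{-1}$, which simultaneously preserves Gaussianity and explains the term $(1-\varepsilon_k)Y^{-1}$ inside the gain. Multiplying the Gaussian prior by this Gaussian likelihood and reading off the completed-square mean and covariance gives the posterior $\mathcal{N}(\hat{x}_k, P_k)$ with $\hat{x}_k = \hat{x}_k^- - K_k \hat{y}_k^-$ and $P_k = P_k^- - K_k C P_k^-$, in agreement with \eqref{eq7} at $\varepsilon_k = 0$.

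Finally, I would unify the three cases into the single expression \eqref{eq7} — the prefactor $\eta_k$ selecting the no-update case, and the indicators $\varepsilon_k$ and $(1-\varepsilon_k)$ selecting the real versus virtual measurement — and then close the induction with the time update: since $x_{k+1} = A x_k + w_k$ with $w_k$ zero-mean Gaussian independent of $\mathcal{I}_k$, the law $f(x_{k+1} \mid \mathcal{I}_k)$ is Gaussian with mean $A \hat{x}_k$ and covariance $A P_k A^\top + Q = h(P_k)$, recovering \eqref{q6} and restoring the hypothesis at time $k+1$. The delicate point throughout is confirming that the \emph{stochastic} triggering law is precisely what keeps every conditional distribution Gaussian; a deterministic threshold as in \cite{wu2013event} would truncate the prior and destroy this structure, so I would emphasize that the exponential form in \eqref{eq2} is exactly what makes the virtual-measurement reduction exact.
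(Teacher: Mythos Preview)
Your proof is correct and follows essentially the same inductive, case-splitting strategy as the paper's, the only difference being that where the paper simply invokes \cite{han2015stochastic}'s Theorem~1 for the entire $\eta_k=1$ branch, you work out the $\varepsilon_k=0$ subcase explicitly via the Gaussian integral and the virtual-measurement interpretation. This makes your argument more self-contained but not structurally different.
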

	\begin{proof}
		See Appendix \ref{a1}.	
	\end{proof}	
	
	The proof of this theorem uses the  Gaussian property of the distribution proved in~\cite{han2015stochastic}. By exploiting the Gaussian property, the recursion of the update is linear, which reduces computational complexities. 
	
	By exploiting the concavity, monotonicity and limit property, the asymptotic upper and lower bounds on $ \mathbb{E}[P_{k}^{-}] $ are shown as Lemma~\ref{l2}.

	\begin{lemma}\label{l2}		
		The mean error covariance $ \mathbb{E}[P_{k}^{-}]  $ satisfies
		\begin{equation*}
		g_{R_{1}}^{k}(\Sigma)\leq\mathbb{E}[P_{k}^{-}]\leq g_{\lambda,R+Y^{-1}}^{k}(\Sigma),
		\end{equation*}
		where $ R_{1}^{-1}=\gamma R^{-1}+(\lambda-\gamma)(R+Y^{-1})^{-1} $.
		
		The asymptotic upper and lower bounds on $ \mathbb{E}[P_{k}^{-}]  $ are
		\begin{equation}\label{e19}
		\underline{X}_{ol}\leq\liminf\limits_{k\rightarrow\infty}\mathbb{E}[P_{k}^{-}]\leq\limsup\limits_{k\rightarrow\infty}\mathbb{E}[P_{k}^{-}] \leq \bar{X}_{ol}, 
		\end{equation}
		where  $ \underline{X}_{ol}>0$ is the unique solution to	
		$ \underline{X}_{ol}=g_{R_{1}}(\underline{X}_{ol}) $,
		and $ \bar{X}_{ol}>0 $ is the unique solution to	
		$ \bar{X}_{ol} =g_{\lambda,R+Y^{-1}}(\bar{X}_{ol}) $.
		
		For all schedules satisfied \eqref{eq2}, we  obtain that$  \underline{X}_{ol}\geq X_{0}$, where $ X_{0}>0 $ is the unique solution to
		\begin{equation}\label{e15}
		X_{0}=g_{R/\lambda}(X_{0}) .
		\end{equation}		
	\end{lemma}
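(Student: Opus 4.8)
The plan is to reduce the random recursion to a deterministic one in the \emph{information domain}, where the coupling between the random error covariance and the data-driven triggering can be neutralised. First I would rewrite the recursion of Theorem~\ref{t1} in one line: substituting $K_k$ into the measurement and time updates and applying the matrix inversion lemma gives $P_{k+1}^-=\Phi_{\Delta_k}(P_k^-)$, where $\Phi_{\Delta}(X)\triangleq A(X^{-1}+\Delta)^{-1}A^\top+Q$ and the random information increment $\Delta_k$ equals $0$ when $\eta_k=0$, $C^\top R^{-1}C$ when $\eta_k=1,\varepsilon_k=1$, and $C^\top(R+Y^{-1})^{-1}C$ when $\eta_k=1,\varepsilon_k=0$. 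Since $\eta_k$ is independent of $(\varepsilon_k,y_k)$ with $\Pr(\eta_k=1)=\lambda$, and since stationarity gives $\Pr(\varepsilon_k=1)=\gamma/\lambda$ from \eqref{eq16} for every $k$, I obtain the two facts that drive everything: $\mathbb{E}[\Delta_k]=\gamma C^\top R^{-1}C+(\lambda-\gamma)C^\top(R+Y^{-1})^{-1}C=C^\top R_1^{-1}C$, and, since $(R+Y^{-1})^{-1}\leq R^{-1}$, the schedule-independent bound $\mathbb{E}[\Delta_k]\leq\lambda C^\top R^{-1}C$.

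For the upper bound I would work directly in the covariance domain. Because $g_R(X)\leq g_{R+Y^{-1}}(X)$ (monotonicity of $g_{\theta,W}$ in $W$, Appendix~\ref{a2}) and because $\Phi_\Delta$ is decreasing in $\Delta$, conditioning on $\eta_k$ and averaging over $\varepsilon_k$ yields the \emph{pointwise} one-step bound $\mathbb{E}[P_{k+1}^-\mid\mathcal{I}_{k-1}]\leq(1-\lambda)h(P_k^-)+\lambda g_{R+Y^{-1}}(P_k^-)=g_{\lambda,R+Y^{-1}}(P_k^-)$, the last equality being a direct computation. Taking expectations and invoking the concavity of $g_{\lambda,R+Y^{-1}}$ through the operator Jensen inequality gives $\mathbb{E}[P_{k+1}^-]\leq g_{\lambda,R+Y^{-1}}(\mathbb{E}[P_k^-])$, and monotonicity then propagates this from $\mathbb{E}[P_0^-]=\Sigma$ to $\mathbb{E}[P_k^-]\leq g_{\lambda,R+Y^{-1}}^k(\Sigma)$; the limit property of Appendix~\ref{a2} delivers $\limsup_k\mathbb{E}[P_k^-]\leq\bar X_{ol}$.

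The lower bound is the hard part, and the obstacle is exactly that a naive repetition of the above fails: $P_k^-$ is statistically coupled to the data-dependent $\Delta_k$ (a large prior covariance makes triggering more likely), while $g_{\theta,W}$ is \emph{concave} in the covariance, so Jensen now points the wrong way. My resolution is to pass to the information matrix and exploit a joint concavity that absorbs the coupling. Writing $\xi(M)\triangleq(AM^{-1}A^\top+Q)^{-1}$, one has $(P_{k+1}^-)^{-1}=\xi\big((P_k^-)^{-1}+\Delta_k\big)$, and the identity $\xi(M)=Q^{-1}-Q^{-1}A(A^\top Q^{-1}A+M)^{-1}A^\top Q^{-1}$ (first for $Q>0$, then by continuity for $Q\geq0$) shows $\xi$ is operator \emph{monotone} and operator \emph{concave} in $M$. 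The crucial point is that $M=(P_k^-)^{-1}+\Delta_k$ is a single random matrix: operator Jensen needs only its mean, not independence of its two summands, so
\begin{equation*}
\mathbb{E}\big[(P_{k+1}^-)^{-1}\big]=\mathbb{E}\big[\xi(M)\big]\leq\xi\big(\mathbb{E}[(P_k^-)^{-1}]+C^\top R_1^{-1}C\big).
\end{equation*}
With the inductive hypothesis $\mathbb{E}[(P_k^-)^{-1}]\leq\big(g_{R_1}^k(\Sigma)\big)^{-1}$, monotonicity of $\xi$ gives $\mathbb{E}[(P_{k+1}^-)^{-1}]\leq\xi\big((g_{R_1}^k(\Sigma))^{-1}+C^\top R_1^{-1}C\big)=\big(g_{R_1}^{k+1}(\Sigma)\big)^{-1}$, closing the induction from the exact base case $P_0^-=\Sigma$. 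Finally, operator convexity of $Z\mapsto Z^{-1}$ gives $(\mathbb{E}[P_k^-])^{-1}\leq\mathbb{E}[(P_k^-)^{-1}]\leq(g_{R_1}^k(\Sigma))^{-1}$, i.e.\ $\mathbb{E}[P_k^-]\geq g_{R_1}^k(\Sigma)$, and the limit property yields $\underline X_{ol}\leq\liminf_k\mathbb{E}[P_k^-]$.

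For the schedule-independent floor $X_0$, I would run the identical information-domain induction but replace $\mathbb{E}[\Delta_k]$ by the uniform bound $C^\top(R/\lambda)^{-1}C\geq\mathbb{E}[\Delta_k]$, which holds for \emph{every} schedule of the form \eqref{eq2}; this gives $\mathbb{E}[P_k^-]\geq g_{R/\lambda}^k(\Sigma)\to X_0$. Equivalently, since $(R+Y^{-1})^{-1}\leq R^{-1}$ forces $R_1^{-1}\leq\lambda R^{-1}$, hence $R_1\geq R/\lambda$ and $g_{R_1}\geq g_{R/\lambda}$ pointwise, the fixed point $\underline X_{ol}=g_{R_1}(\underline X_{ol})\geq g_{R/\lambda}(\underline X_{ol})$ is a super-solution of $g_{R/\lambda}$, so monotone iteration and uniqueness of $X_0$ give $\underline X_{ol}\geq X_0$. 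I expect the main technical care to be the operator-Jensen/joint-concavity step and the $Q\geq0$ (possibly singular) limiting argument for $\xi$; everything else is bookkeeping with the concavity, monotonicity and limit properties already collected in Appendix~\ref{a2}.
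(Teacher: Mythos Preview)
Your proposal is correct and follows essentially the same route as the paper's proof: the upper bound via monotonicity in $W$, concavity of $g_{\lambda,R+Y^{-1}}$, and Jensen in the covariance domain; the lower bound via the information form $S_k^-=(P_k^-)^{-1}$, the linear update $S_k=S_k^-+\Delta_k$ with $\mathbb{E}[\Delta_k]=C^\top R_1^{-1}C$, operator concavity of the time-update map $M\mapsto(AM^{-1}A^\top+Q)^{-1}$, and a final inversion via convexity of $Z\mapsto Z^{-1}$; and the floor $\underline X_{ol}\geq X_0$ from $R_1\geq R/\lambda$ together with monotonicity in $W$. The only differences are packaging: you bundle measurement and time updates into a single map and are more explicit about why the $P_k^-$--$\Delta_k$ coupling is harmless (linearity of expectation suffices) and about the concavity of $\xi$ via the Schur-complement identity, whereas the paper splits the two steps and cites \cite{yang2013schedule} for the concavity.
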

	\begin{proof}
		See Appendix \ref{a3}.
	\end{proof}

	\begin{figure*}[ht]
		\hrulefill	
		\normalsize
		
		\begin{equation}\label{eq10}
		\Psi(S,Y)\triangleq\left[ \begin{array}{ccccc}
		S& \sqrt{\lambda}SA & \sqrt{1-\lambda}SA & S & 0 \\ 
		\sqrt{\lambda}A^{\top}S& S+C^{\top}R^{-1}C & 0 & 0 &  C^{\top}R^{-1}\\ 
		\sqrt{1-\lambda}A^{\top}S& 0 & S & 0 & 0 \\ 
		S& 0 & 0 & Q^{-1} & 0 \\ 
		0& R^{-1}C & 0 & 0 & Y+R^{-1}
		\end{array} \right]	
		\end{equation}
		\hrulefill
		\vspace*{4pt}
	\end{figure*}
	\begin{remark}
		By applying the information filtering and exploiting the convexity of $ X^{-1} $, we obtain a different lower bound on $ \mathbb{E}[P_{k}^{-}]  $, i.e., $ \underline{X}_{ol} $. We plot it with respect to (w.r.t.) $ \gamma $ in Fig.~\ref{f4}. It is different from the lower bound derived in~\cite{sinopoli2004kalman} which is denoted as $ X_{p} $, where $ X_{p}=(1-\lambda)AX_{p}A^{\top}+Q  $. The matrix~$ X_{0}  $ is the lower bound of $ \mathbb{E}[P_{k}^{-}] $ for all  schedulers. When $ \lambda=1 $, the lower bound derived in our paper is larger than $ X_{p} $, i.e.,  $ \underline{X}_{ol}>Q=X_{p} $. For scalar systems, we can choose $\max\{\underline{X}_{ol},X_{p}\}  $ to be the lower bound.
	\end{remark}
	
	From the above analysis, we relax Problem \ref{pi} to bound the asymptotic upper bound on the  mean error covariance, i.e.,~$ \bar{X}_{ol}  $. 
	\begin{problem}\label{q2}
		\begin{equation*}
		\begin{split}
		\min\limits_{Y}\, &\gamma \\
		{\rm s.t.}\, & {Y\geq 0},\bar{X}_{ol} \leq M. 
		\end{split}		
		\end{equation*}	
	\end{problem}	
	We observe that for the scaler case ($Y\in\mathbb{R}$), the above problem can be easily solved by convex programming. However, for the general vector cases, Problem $2$ is not convex since $(5)$ is a log concave function of matrix $Y$. We need the following lemma to study the general vector case.
	\begin{lemma}\cite[Lemma 2]{han2015stochastic}\label{l3}	
		Given $ \gamma $ in \eqref{eq16}, $ \Pi,Y\in\mathbb{S}_{+}^{n} $, the following inequality holds:
		\begin{equation}\label{eq28}
		f_{1}\big({\rm tr }(\Pi Y)\big)\leq\gamma\leq f_{2}\big({\rm tr} (\Pi Y)\big),
		\end{equation}
		where $ 	f_{1}(x)=\lambda(1-(1+x)^{-\frac{1}{2}}),f_{2}(x)=\lambda(1-\exp(-\frac{1}{2}x))$.
		The equality is only satisfied when $ {\rm tr }(\Pi Y)=0 $.		 
	\end{lemma}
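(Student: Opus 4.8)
The plan is to reduce this matrix inequality to a pair of elementary scalar inequalities on the eigenvalues of $\Pi Y$. Starting from the closed form $\gamma = \lambda\big(1 - (\det(I + \Pi Y))^{-1/2}\big)$ in \eqref{eq16}, and recalling $\lambda > 0$, I observe that $f_1$, $f_2$ and $\gamma$ all have the shape $\lambda(1 - s)$ with $s>0$ strictly decreasing in its argument. Consequently $f_1\big(\tr(\Pi Y)\big) \le \gamma \le f_2\big(\tr(\Pi Y)\big)$ is equivalent to the single chain
\[
1 + \tr(\Pi Y) \;\le\; \det(I + \Pi Y) \;\le\; \exp\!\big(\tr(\Pi Y)\big),
\]
with the equality cases transferring directly. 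Thus everything reduces to proving this determinant sandwich.

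Next I would diagonalize. Although $\Pi Y$ need not be symmetric, it is similar to $\Pi^{1/2} Y \Pi^{1/2} \in \mathbb{S}_{+}^{n}$, so its eigenvalues $\mu_1,\dots,\mu_n$ are real and nonnegative. In terms of these, $\tr(\Pi Y) = \sum_i \mu_i$ and $\det(I + \Pi Y) = \prod_i (1 + \mu_i)$, so the chain becomes
\[
1 + \sum_i \mu_i \;\le\; \prod_i (1 + \mu_i) \;\le\; \exp\!\Big(\sum_i \mu_i\Big).
\]

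For the left inequality I would expand the product, $\prod_i (1 + \mu_i) = 1 + \sum_i \mu_i + \sum_{i<j}\mu_i\mu_j + \cdots$, and note that every omitted term is a product of the nonnegative $\mu_i$ and hence nonnegative. For the right inequality I would take logarithms and apply the elementary bound $\ln(1+\mu) \le \mu$ (valid for all $\mu \ge 0$) term by term, which gives $\sum_i \ln(1+\mu_i) \le \sum_i \mu_i$, i.e. $\det(I + \Pi Y) \le \exp\big(\tr(\Pi Y)\big)$.

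Finally, for the equality claim: since $\ln(1+\mu) = \mu$ forces $\mu = 0$, the upper bound is attained iff every $\mu_i = 0$, i.e. iff $\tr(\Pi Y) = \sum_i \mu_i = 0$; in that case $\Pi Y = 0$, all three quantities equal $0$, and $f_1 = \gamma = f_2$, so the sandwich collapses exactly as claimed. I do not anticipate a serious obstacle here; the only point demanding care is the non-symmetry of $\Pi Y$, which is dispatched once and for all by the similarity to $\Pi^{1/2} Y \Pi^{1/2}$, after which the argument is entirely scalar.
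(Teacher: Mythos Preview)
The paper does not supply its own proof of this lemma; it simply cites \cite[Lemma~2]{han2015stochastic}. Your argument is correct and self-contained: the reduction to the determinant sandwich $1+\tr(\Pi Y)\le\det(I+\Pi Y)\le\exp(\tr(\Pi Y))$ followed by the eigenvalue expansion is the natural route, and both scalar inequalities are handled cleanly.

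Two minor remarks. First, the similarity $\Pi Y\sim\Pi^{1/2}Y\Pi^{1/2}$ needs $\Pi$ invertible, whereas the lemma is stated for $\Pi\in\mathbb{S}_+^n$. In the paper's context $\Pi=C\Sigma C^\top+R$ with $R>0$, so $\Pi>0$ and your argument applies as written; for general $\Pi\ge 0$ you can either pass to the limit $\Pi+\epsilon I\downarrow\Pi$ or observe that $\Pi Y$ and $Y^{1/2}\Pi Y^{1/2}$ share the same characteristic polynomial (since $AB$ and $BA$ do for square $A,B$). Second, on the equality clause: you correctly show that $\gamma=f_2$ forces every $\mu_i=0$, hence $\tr(\Pi Y)=0$ and $f_1=\gamma=f_2$. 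Note, however, that the \emph{lower} inequality alone can be tight with $\tr(\Pi Y)\neq 0$ (e.g.\ when $n=1$, or whenever at most one eigenvalue of $\Pi Y$ is nonzero). The lemma's equality statement should therefore be read as ``$f_1=\gamma=f_2$ only when $\tr(\Pi Y)=0$,'' which is precisely what your argument establishes.
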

	
	Using Lemma \ref{l3}, the objective of Problem \ref{pi} is bounded by two increasing functions. Thus, it can be relaxed into $ \min {\rm tr}(\Pi Y)$. 
	\begin{problem}\label{q3}
		\begin{equation*}
		\begin{split}
		\min\limits_{Y}\,&{\rm tr} (\Pi Y)\\
		{\rm s.t.} \, &{Y\geq 0}, \bar{X}_{ol} \leq M,  \bar{X}_{ol} =g_{\lambda,R+Y^{-1}}(\bar{X}_{ol}).
		\end{split}		
		\end{equation*}
	\end{problem}
	
	We  transform Problem \ref{q3} into an SDP problem using Theorem \ref{t2}.	
	\begin{theorem}\label{t2}
		Problem \ref{q3} is equivalent to
		\begin{equation}\label{eq31}
		\begin{split}
		&\min \limits_{S,Y}{\rm tr} (\Pi Y)\\
		{\rm s.t. }\,& \Psi(S,Y)\geq 0,	\left[ \begin{array}{cc}
		S& I \\ 
		I& M
		\end{array} \right] \geq 0,Y\geq0,
		\end{split}	
		\end{equation}
		where $  \Psi(S,Y)	 $ is defined as \eqref{eq10}.
	\end{theorem}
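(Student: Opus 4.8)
The plan is to recognize the semidefinite constraint $\Psi(S,Y)\ge 0$ as a Schur-complement encoding, under the substitution $S=X^{-1}$, of the Riccati inequality $X\ge g_{\lambda,R+Y^{-1}}(X)$, and to show that the implicit fixed-point constraint in Problem \ref{q3} is equivalent to the existence of such an $X$ with $X\le M$. Concretely, I would prove the equivalence at the level of feasible sets: show that the projection onto $Y$ of the feasible set of \eqref{eq31} coincides with the feasible set of Problem \ref{q3}, while the objective ${\rm tr}(\Pi Y)$ is identical, so the two programs have the same optimal value and optimizers.

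First I would relax the fixed-point equality $\bar{X}_{ol}=g_{\lambda,R+Y^{-1}}(\bar{X}_{ol})$ together with $\bar{X}_{ol}\le M$ to the feasibility statement that there exists $X>0$ with $g_{\lambda,R+Y^{-1}}(X)\le X\le M$. The forward direction takes $X=\bar{X}_{ol}$. For the converse I would invoke the monotonicity of $g_{\theta,W}$ established in Appendix \ref{a2}: if $g(X)\le X$ then the iterates $g^{k}(X)$ are nonincreasing, hence converge to the unique fixed point $\bar{X}_{ol}$ of Lemma \ref{l2}, giving $\bar{X}_{ol}\le X\le M$. This replaces the implicit fixed-point constraint by an ordinary matrix inequality.

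The technical core is to verify $\Psi(S,Y)\ge 0\iff S^{-1}\ge g_{\lambda,R+Y^{-1}}(S^{-1})$. I would do this by iterated Schur complements on the block pivots $(5,5)=Y+R^{-1}$, $(4,4)=Q^{-1}$, $(3,3)=S$ and $(2,2)=S+C^{\top}R^{-1}C$, all positive definite once $S>0$ (the pivot $Q^{-1}$ additionally requiring $Q>0$, addressed below). Eliminating the last block via the Woodbury identity $(R+Y^{-1})^{-1}=R^{-1}-R^{-1}(Y+R^{-1})^{-1}R^{-1}$ converts the $(2,2)$ block into the information term $S+C^{\top}(R+Y^{-1})^{-1}C$; eliminating $(3,3)$, $(4,4)$ and this updated $(2,2)$ leaves the single condition $S-SQS-(1-\lambda)SAS^{-1}A^{\top}S-\lambda SA\big(S+C^{\top}(R+Y^{-1})^{-1}C\big)^{-1}A^{\top}S\ge 0$. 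Conjugating by $X=S^{-1}$ and applying the matrix inversion lemma $\big(X^{-1}+C^{\top}(R+Y^{-1})^{-1}C\big)^{-1}=X-XC^{\top}(CXC^{\top}+R+Y^{-1})^{-1}CX$ collapses this to exactly $X\ge AXA^{\top}+Q-\lambda AXC^{\top}(CXC^{\top}+R+Y^{-1})^{-1}CXA^{\top}=g_{\lambda,R+Y^{-1}}(X)$. In parallel, the block $\left[\begin{smallmatrix} S & I\\ I& M\end{smallmatrix}\right]\ge 0$ is, by a Schur complement on $M>0$, equivalent to $S\ge M^{-1}$; this simultaneously forces $S>0$ (so that $X=S^{-1}$ is well defined) and encodes $X=S^{-1}\le M$.

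Assembling these, a pair $(S,Y)$ is feasible for \eqref{eq31} iff $X=S^{-1}>0$ satisfies $g_{\lambda,R+Y^{-1}}(X)\le X\le M$ with $Y\ge0$, which by the first step holds iff $Y\ge0$ is feasible for Problem \ref{q3}; since the objectives coincide, the two problems are equivalent. I expect the main obstacle to be the bookkeeping in the Schur-complement chain — tracking which pivot updates which block and invoking Woodbury and the matrix inversion lemma in the right places so that the reduced inequality is \emph{exactly} the Riccati inequality rather than a looser surrogate. A secondary subtlety is that the pivot $Q^{-1}$ needs $Q>0$; I would either state this as a standing assumption for the definition \eqref{eq10}, or recover the general $Q\ge0$ case by a limiting argument, noting that $S>0$ is supplied for free by the second LMI.
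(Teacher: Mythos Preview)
Your proposal is correct and follows essentially the same approach as the paper: both replace the fixed-point constraint by the Riccati inequality $g_{\lambda,R+Y^{-1}}(X)\le X\le M$ via monotonicity, encode $X\le M$ as $S\ge M^{-1}$ by a Schur complement, and identify $\Psi(S,Y)\ge 0$ with the Riccati inequality through the matrix inversion lemma and Woodbury. The only cosmetic difference is orientation---the paper builds $\Psi$ up from the Riccati inequality (constructing the intermediate $4\times4$ block $\Theta$, conjugating to $\Gamma$, then expanding via Woodbury), whereas you reduce $\Psi$ down by successive pivot eliminations; these are the same Schur-complement chain traversed in opposite directions. Your explicit flagging of the $Q>0$ requirement for the $(4,4)$ pivot is a point the paper uses without comment.
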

	\begin{proof}
		The proof mainly follows two steps. First we prove equivalent LMIs to replace the implicit function $ g_{\lambda, R+Y^{-1}} $. Then the function $ g_{\lambda, R+Y^{-1}}(X)\leq X $ is transformed into an equivalent SDP constraint. The details are shown in Appendix \ref{a4}.
	\end{proof}	
	\begin{remark}
		 Sinopoli et al.~\cite{sinopoli2004kalman} also derived an SDP constraint from $ g_{\lambda, W}(X)\leq X $, but they neglected the influence of $ Q $ and $ W $ through relaxation. However, in our case, since $ W $ corresponds to the decision variable $ Y $, we cannot eliminate the influence.  
	\end{remark}

	Define  $ \gamma^{*} $ as the communication rate with optimal $ Y^{*} $ in Problem \ref{q3}.	 Let the optimal solution to Problem \ref{q2} be $ Y^{opt} $ and the minimum  objective be $ \gamma^{opt} $. Define the gap $ \kappa $ as $ 	\kappa \triangleq\gamma^{*}-\gamma^{opt} $. By \eqref{eq28}, one has
	\begin{equation}\label{eq30}
	0<\kappa<\lambda\left( \left( 1+{\rm tr}(\Pi Y^{*})\right) ^{-\frac{1}{2}}-\left( \det(I+\Pi Y)\right) ^{-\frac{1}{2}}\right) .
	\end{equation} 
	Fig. \ref{fp1} shows the relationship between the problems.

	\begin{figure}[!t]
		\centering
		\includegraphics[width=0.9\columnwidth]{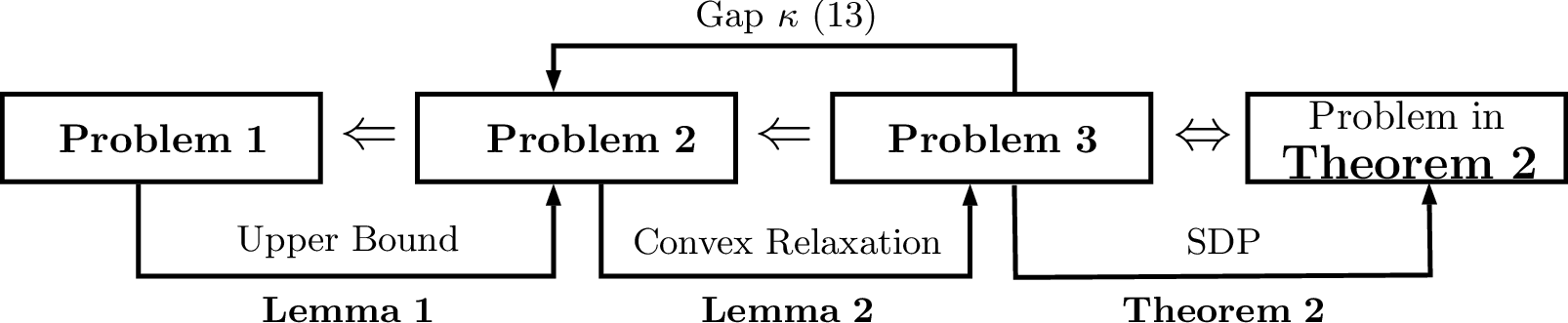}
		\caption{The relationship between each problem in the open-loop case.			
			(``$ \Leftarrow $'' denotes that the optimal solution to this problem will satisfy the second problem's constraints and ``$ \Leftrightarrow $'' represents these two problems are equivalent.)}
		\label{fp1}
	\end{figure}

	\section{closed-loop Scenario} \label{s6}
	The case which feeds $ \hat{y}_{k}^{-} $ back from the remote side and then adopts closed-loop schedulers to trigger system is called closed-loop scenario. In closed-loop scenario, the same difficulty  lying in analyzing the mean error covariance  remains since the error covariance cannot be determined a priori. In addition, different from Section \ref{s3}, the communication rate herein depends on the event-triggered parameter as well as the realization of the error covariance; therefore it is not able to be determined a priori too.
	
	\subsection{Performance Analysis and Problem Reformulation}		
	\begin{theorem}\label{t3}
		The MMSE estimate under a closed-loop scheduler is as follows. 	
		Start from the initial condition $ \hat{x}_{0}^{-}=0 $ and $ P_{0}^{-}=\Pi_{0} $.\\
		Measurement Update:
		\begin{equation} 
		\begin{split} 	
		K_k		&=	\eta_{k}P_{k}^{-}C^{\top}\big(CP_{k}^{-}C^{\top} + R+(1-\varepsilon_{k})Z^{-1}\big)^{-1} 	\\
		\hat{x}_{k}	& = \hat{x}_{k}^{-}+\varepsilon_{k}K_kz_{k}, 	\\
		P_{k}			&=	P_{k}^{-} -K_k	C P_{k}^{-},
		\end{split}
		\end{equation}		
		Time Update:
		\begin{equation} \label{e33}
		\begin{split}
		\hat{x}_{k+1}^{-}=	A\hat{x}_{k},	P_{k+1}^{-}		=	h(P_{k}).
		\end{split}
		\end{equation}
		
	\end{theorem}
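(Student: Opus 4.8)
The plan is to proceed exactly as in the proof of Theorem~\ref{t1}, by induction on $k$, maintaining the invariant that the conditional law of the state given the past information is Gaussian. Concretely, I would take as the induction hypothesis that $x_k \mid \mathcal{I}_{k-1} \sim \mathcal{N}(\hat{x}_k^-, P_k^-)$, which holds at $k=0$ by construction. Since $\hat{y}_k^- = C\hat{x}_k^-$, the innovation $z_k = y_k - \hat{y}_k^- = C(x_k - \hat{x}_k^-) + v_k$ is then, conditionally on $\mathcal{I}_{k-1}$, zero-mean Gaussian with covariance $CP_k^-C^\top + R$. The key structural difference from the open-loop case is that the triggering law \eqref{eq62} tests the \emph{centered} innovation rather than the raw measurement, so $z_k$ has conditional mean zero; this is precisely what will force the posterior mean to stay put whenever $\varepsilon_k = 0$.

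For the measurement update I would condition on the three mutually exclusive events that the estimator can distinguish. Because $\eta_k$ is i.i.d. and independent of the state and noise processes, conditioning on $\eta_k$ does not alter the law of $x_k$ given $\mathcal{I}_{k-1}$. When $\eta_k = 0$ the received information carries nothing about $x_k$, so $\hat{x}_k = \hat{x}_k^-$ and $P_k = P_k^-$; this is reproduced by the stated formulas since $\eta_k = 0$ forces $K_k = 0$. When $\eta_k = 1$ and $\varepsilon_k = 1$ the estimator observes $y_k$, and the update is the textbook Kalman correction in innovation form, giving $K_k = P_k^-C^\top(CP_k^-C^\top + R)^{-1}$, $\hat{x}_k = \hat{x}_k^- + K_k z_k$, and $P_k = P_k^- - K_k C P_k^-$, which matches the theorem with $(1-\varepsilon_k)Z^{-1} = 0$.

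The substantive step is the case $\eta_k = 1$, $\varepsilon_k = 0$, where the estimator learns only that the event did \emph{not} fire. By Bayes' rule the posterior density is proportional to the Gaussian prior times the likelihood $\Pr(\varepsilon_k = 0 \mid x_k)$, and since $\Pr(\varepsilon_k = 0 \mid z_k) = \exp(-\tfrac12 z_k^\top Z z_k)$ I would marginalize out $v_k \sim \mathcal{N}(0,R)$ to obtain this likelihood as a function of $x_k$ alone. Completing the square in $v_k$ turns the Gaussian integral into $\exp(-\tfrac12 (x_k-\hat{x}_k^-)^\top C^\top [Z - Z(Z+R^{-1})^{-1}Z] C (x_k-\hat{x}_k^-))$ up to a constant, and the matrix identity $Z - Z(Z+R^{-1})^{-1}Z = (R + Z^{-1})^{-1}$ (verified by inverting the right-hand side back to $R + Z^{-1}$) recasts the likelihood as a pseudo-measurement of $Cx_k$ with observed value $C\hat{x}_k^-$ and inflated noise $R + Z^{-1}$. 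The product of two Gaussians is Gaussian, which closes the induction, and the resulting correction is the Kalman update against a vanishing innovation: $K_k = P_k^-C^\top(CP_k^-C^\top + R + Z^{-1})^{-1}$, $\hat{x}_k = \hat{x}_k^-$, and $P_k = P_k^- - K_k C P_k^-$. These agree with the stated formulas once $\eta_k = 1$ and $(1-\varepsilon_k)Z^{-1} = Z^{-1}$, and in particular explain why $\hat{x}_k = \hat{x}_k^- + \varepsilon_k K_k z_k$ leaves the mean unchanged here. Finally the time update is the linear-Gaussian prediction $\hat{x}_{k+1}^- = A\hat{x}_k$ and $P_{k+1}^- = A P_k A^\top + Q = h(P_k)$.

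I expect the main obstacle to be rigorously justifying that the posterior stays Gaussian in the $\varepsilon_k = 0$ branch: the triggering law couples the state, the measurement noise, and the exogenous channel variable, so one must check that marginalizing over $v_k$ and conditioning on $\mathcal{I}_{k-1}$ commute and that no hidden dependence on $\eta_k$ creeps in. I would lean on the Gaussian-preservation property established in~\cite{han2015stochastic}, which guarantees that the stochastic test \eqref{eq62} keeps the conditional distribution within the Gaussian family, and treat the completing-the-square step and the matrix identity as the only genuine computations; everything else is bookkeeping inherited from the open-loop argument of Theorem~\ref{t1}.
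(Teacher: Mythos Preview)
Your proposal is correct and follows essentially the same route as the paper's proof: both establish that $z_k\mid\mathcal{I}_{k-1}\sim\mathcal{N}(0,CP_k^-C^\top+R)$, then reduce the measurement update to the open-loop case (Theorem~\ref{t1}) with $z_k$ in place of $y_k$ and $Z$ in place of $Y$, using the Gaussian-preservation result of~\cite{han2015stochastic} for the $\varepsilon_k=0$ branch. The paper is simply terser---it invokes Theorem~\ref{t1} and~\cite{han2015stochastic} directly rather than spelling out the Bayes/completing-the-square computation you carry out, and observes that $\mathbb{E}[z_k\mid\mathcal{I}_{k-1}]=0$ kills the extra term in the mean update.
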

	\begin{proof}
		See Appendix \ref{at3}.
	\end{proof}

	Similar to equation \eqref{eq16}, we obtain the average communication rate in a closed-loop scenario as the following lemma.
	\begin{lemma} \label{l4}
		The average  communication rate  $ \gamma $ under a closed-loop scheduler \eqref{eq62} is 
		\begin{equation}\label{eq70}
		\begin{split}
		\gamma=&\lambda\mathbb{E}\left[ 1-\left( \text{det}\left( I+(CP_{k}^{-}C^{\top}+R)Z\right) \right)^{-\frac{1}{2}}\right]  \\
		\leq&\lambda\left( 1-\left( \text{det}\left( I+(C\mathbb{E}[P_{k}^{-}]C^{\top}+R)Z\right) \right)^{-\frac{1}{2}}\right) .
		\end{split}		
		\end{equation}
	\end{lemma}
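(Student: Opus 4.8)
The plan is to reduce the claim to the computation of $\mathbb{E}[\varepsilon_{k}]$ and then evaluate a Gaussian integral, closely paralleling the derivation of the open-loop rate \eqref{eq16}. By definition \eqref{eq4}, and because the channel availability $\eta_{k}$ is drawn independently of the triggering randomness $\zeta_{k}$ and of the innovation $z_{k}$, I would first factor $\mathbb{E}[\eta_{k}\varepsilon_{k}]=\lambda\,\mathbb{E}[\varepsilon_{k}]$, so that only $\mathbb{E}[\varepsilon_{k}]$ remains to be identified at each time step. I would then apply the tower property, writing $\mathbb{E}[\varepsilon_{k}]=\mathbb{E}\big[\mathbb{E}[\varepsilon_{k}\mid\mathcal{I}_{k-1}]\big]$, and compute the inner conditional expectation explicitly.

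For that inner term I would condition one step further on $z_{k}$. Since $\zeta_{k}\sim U(0,1)$ is independent of $z_{k}$ and $\exp(-\frac{1}{2}z_{k}^{\top}Zz_{k})\in(0,1]$ for $Z>0$, the rule \eqref{eq62} gives $\mathbb{E}[\varepsilon_{k}\mid z_{k}]=\Pr\big(\zeta_{k}>\exp(-\frac{1}{2}z_{k}^{\top}Zz_{k})\mid z_{k}\big)=1-\exp(-\frac{1}{2}z_{k}^{\top}Zz_{k})$. The key structural input is that the Gaussian property is preserved in the closed-loop recursion (the same fact underlying Theorem \ref{t3} and \cite{han2015stochastic}): conditioned on $\mathcal{I}_{k-1}$, the innovation $z_{k}=y_{k}-\hat{y}_{k}^{-}$ is zero-mean Gaussian with covariance $CP_{k}^{-}C^{\top}+R$, where $P_{k}^{-}$ is $\mathcal{I}_{k-1}$-measurable. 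Completing the square in the Gaussian integral $\mathbb{E}\big[\exp(-\frac{1}{2}z_{k}^{\top}Zz_{k})\mid\mathcal{I}_{k-1}\big]$ then yields $\big(\det(I+(CP_{k}^{-}C^{\top}+R)Z)\big)^{-1/2}$, using the identity $\det(\Sigma)\det(Z+\Sigma^{-1})=\det(I+\Sigma Z)$ with $\Sigma=CP_{k}^{-}C^{\top}+R$. Taking expectation over $\mathcal{I}_{k-1}$ and invoking \eqref{eq4} delivers the stated equality for $\gamma$.

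For the inequality I would invoke Jensen's inequality, which requires that $X\mapsto\big(\det(I+(CXC^{\top}+R)Z)\big)^{-1/2}$ be convex on the positive semidefinite cone. I would establish this by writing the map as $\exp\!\big(-\frac{1}{2}\log\det(I+(CXC^{\top}+R)Z)\big)$: the inner argument $X\mapsto I+(CXC^{\top}+R)Z$ is affine in $X$, $\log\det$ is concave on the positive definite cone, so $-\frac{1}{2}\log\det(\cdot)$ is convex, and since $\exp$ is convex and nondecreasing the composition is convex. Applying Jensen to the random matrix $P_{k}^{-}$ gives $\mathbb{E}\big[(\det(I+(CP_{k}^{-}C^{\top}+R)Z))^{-1/2}\big]\ge(\det(I+(C\mathbb{E}[P_{k}^{-}]C^{\top}+R)Z))^{-1/2}$; multiplying by $-\lambda$ (which reverses the inequality) and adding $\lambda$ produces exactly the claimed bound.

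I expect the main obstacle to be the rigorous justification of the conditional Gaussianity of $z_{k}$ given $\mathcal{I}_{k-1}$ under the coupled exogenous uncertainty. Because $P_{k}^{-}$ is itself random through the history of $\eta$, one must argue that conditioning on the full information set (including the realized channel states) makes $P_{k}^{-}$ deterministic and $z_{k}$ genuinely Gaussian, rather than a Gaussian mixture of the type noted in \cite{kung2017nonexistence}. This decoupling is precisely what the estimator's ability to identify the packet source enables, and I would lean on the distributional result behind Theorem \ref{t3} to supply it; once that is in place, both the Gaussian integral and the $\log\det$ convexity argument are routine.
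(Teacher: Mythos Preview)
Your derivation of the equality is essentially the same as the paper's: factor out $\lambda$, condition on $\mathcal{I}_{k-1}$, use the uniform law of $\zeta_{k}$ and the conditional Gaussianity of $z_{k}$, and evaluate the resulting Gaussian integral. For the Jensen step, however, you take a different route. The paper reduces to showing that $M\mapsto (\det M)^{-1/2}$ is convex on $\{M\geq I\}$ and invokes a criterion of Lehmich et al.\ characterizing convexity of $b(\det(\cdot))$ on $\mathbb{S}_{++}^{n}$ via the scalar ODE-type inequality $nsb''(s)+(n-1)b'(s)\geq 0$, $b'(s)\leq 0$. Your argument instead uses the standard concavity of $\log\det$ together with the composition rule for $\exp$, which is more elementary and self-contained; the paper's route is shorter once the cited lemma is granted but imports an external result. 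One small gap in your write-up: the map $X\mapsto I+(CXC^{\top}+R)Z$ does not land in the symmetric positive-definite cone, so you cannot apply concavity of $\log\det$ directly to it. The fix is immediate: use $\det(I+\Sigma Z)=\det(I+Z^{1/2}\Sigma Z^{1/2})$, so that $X\mapsto I+Z^{1/2}(CXC^{\top}+R)Z^{1/2}$ is affine into $\mathbb{S}_{++}^{m}$ and the rest of your composition argument goes through verbatim.
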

	
	\begin{proof}
		See Appendix \ref{a5}.			
	\end{proof} 
	
	From the above lemma, once the upper bound of $ \mathbb{E}[P_{k}^{-}] $ is obtained in the closed-loop scenario, we derive the upper bound of the average communication rate. The asymptotic upper and lower bounds on $ \mathbb{E}[P_{k}^{-}] $ under the closed-loop scheduler are shown in Lemma \ref{l5} and Lemma \ref{l7}.
	
	\begin{lemma}\label{l5}		
		The mean error covariance $ \mathbb{E}[P_{k}^{-}]  $ is bounded by
		\begin{equation*}
		\mathbb{E}[P_{k}^{-}]\leq g_{\lambda,R+Z^{-1}}^{k}(\Sigma).
		\end{equation*}		
		
		The asymptotic upper bound on $ \mathbb{E}[P_{k}^{-}]  $ is
		\begin{equation*}
		\limsup\limits_{k\rightarrow\infty}\mathbb{E}[P_{k}^{-}] \leq \bar{X}_{cl}, 
		\end{equation*}
		where  $ \bar{X}_{cl}>0 $  is the unique solution to		
		$ \bar{X}_{cl}=g_{\lambda,R+Z^{-1}}(\bar{X}_{cl}) $. 	
	\end{lemma}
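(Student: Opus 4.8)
The plan is to establish a one-step conditional bound of the form $\mathbb{E}[P_{k+1}^- \mid \mathcal{I}_{k-1}] \le g_{\lambda,R+Z^{-1}}(P_k^-)$, then iterate it using the concavity and monotonicity of $g_{\lambda,R+Z^{-1}}$ collected in Appendix \ref{a2}, and finally pass to the limit $k\to\infty$. The only randomness entering the recursion beyond $\mathcal{I}_{k-1}$ is the channel indicator $\eta_k$ (Bernoulli with mean $\lambda$), the uniform seed $\zeta_k$, and the innovation $z_k$, which by the Gaussian property underlying Theorem \ref{t3} satisfies $z_k\mid\mathcal{I}_{k-1}\sim\mathcal{N}(0,CP_k^-C^\top+R)$; moreover $P_k^-$ is $\mathcal{I}_{k-1}$-measurable, so conditioning on $\mathcal{I}_{k-1}$ fixes it.

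For the one-step bound I condition on $\mathcal{I}_{k-1}$ and split on $\eta_k$. When $\eta_k=0$ the gain in Theorem \ref{t3} vanishes, so $P_k=P_k^-$. When $\eta_k=1$, the Gaussian-preserving property makes the posterior covariance deterministic given the realization of $\varepsilon_k$: it equals $P_k^- - P_k^-C^\top(CP_k^-C^\top+R)^{-1}CP_k^-$ when $\varepsilon_k=1$ and $P_k^- - P_k^-C^\top(CP_k^-C^\top+R+Z^{-1})^{-1}CP_k^-$ when $\varepsilon_k=0$. Since $Z>0$ gives $R+Z^{-1}>R$, monotonicity of the matrix inverse shows the first expression is dominated by the second; hence, whatever the data-dependent triggering probability happens to be, averaging over $\varepsilon_k$ yields $\mathbb{E}[P_k\mid\eta_k=1,\mathcal{I}_{k-1}]\le P_k^- - P_k^-C^\top(CP_k^-C^\top+R+Z^{-1})^{-1}CP_k^-$. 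Combining the two cases with weights $1-\lambda$ and $\lambda$, applying the time update $P_{k+1}^-=h(P_k)$, and collecting terms reproduces exactly $g_{\lambda,R+Z^{-1}}(P_k^-)$, giving $\mathbb{E}[P_{k+1}^-\mid\mathcal{I}_{k-1}]\le g_{\lambda,R+Z^{-1}}(P_k^-)$.

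To iterate, I take total expectations and combine Jensen's inequality with the concavity of $g_{\lambda,R+Z^{-1}}$ (Appendix \ref{a2}) to obtain $\mathbb{E}[P_{k+1}^-]\le g_{\lambda,R+Z^{-1}}(\mathbb{E}[P_k^-])$; then the monotonicity of $g_{\lambda,R+Z^{-1}}$ and an induction from the steady-state base case $P_0^-=\Sigma$ give $\mathbb{E}[P_k^-]\le g_{\lambda,R+Z^{-1}}^k(\Sigma)$, which is the first claim. For the asymptotic statement I invoke the standard modified-Riccati theory for the operator $g_{\lambda,R+Z^{-1}}$: under $(A,\sqrt{Q})$ stabilizable, $(A,C)$ detectable, and $\lambda>1-1/\rho(A)^2$, the iteration $g_{\lambda,R+Z^{-1}}^k(\Sigma)$ converges to the unique positive definite fixed point $\bar{X}_{cl}$ of $\bar{X}_{cl}=g_{\lambda,R+Z^{-1}}(\bar{X}_{cl})$, so that $\limsup_{k\rightarrow\infty}\mathbb{E}[P_k^-]\le\lim_{k\rightarrow\infty}g_{\lambda,R+Z^{-1}}^k(\Sigma)=\bar{X}_{cl}$.

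The main obstacle is the one-step conditional bound: it hinges on correctly using the Gaussian-preserving property of the scheduler so that the posterior $P_k$ is deterministic given $(\eta_k,\varepsilon_k)$, and on recognizing that the event-triggered update with actual transmissions can only decrease the covariance relative to the ``never-trigger'' surrogate whose averaged recursion is precisely $g_{\lambda,R+Z^{-1}}$. Once that domination is established, the remaining monotonicity, concavity, and convergence steps are routine given Appendix \ref{a2} and run entirely parallel to the upper-bound part of Lemma \ref{l2}.
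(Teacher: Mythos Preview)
Your proposal is correct and follows essentially the same approach as the paper, which refers the reader to the upper-bound part of the proof of Lemma~\ref{l2} in Appendix~\ref{a3}: bound the one-step recursion by replacing $R+(1-\varepsilon_k)Z^{-1}$ with the worst case $R+Z^{-1}$ (monotonicity in $W$), apply Jensen via the concavity of $g_{\lambda,R+Z^{-1}}$, iterate using monotonicity in $X$, and pass to the fixed point $\bar{X}_{cl}$ via Proposition~\ref{p1}. Your version is slightly more explicit in first isolating the conditional bound $\mathbb{E}[P_{k+1}^-\mid\mathcal{I}_{k-1}]\le g_{\lambda,R+Z^{-1}}(P_k^-)$ before taking total expectations, but the ingredients and logic are identical.
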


	From Lemma \ref{l4} and Lemma \ref{l5}, denote the upper bound of the average  communication rate as
	\begin{equation}\label{eq17}
	\bar{\gamma}\triangleq\lambda\left( 1-\left( \text{det}\left( I+(C\bar{X}_{cl}C^{\top}+R)Z\right) \right)^{-\frac{1}{2}}\right).
	\end{equation}
	\begin{remark}
		From \eqref{eq16} and \eqref{eq17}, given the same quality constraints, one has the upper bound of the average communication rate using closed-loop schedules is smaller than the average communication rate by open-loop schedules. Thus, the closed-loop schedulers outperform the open-loop schedules.
	\end{remark}
	
	Substituting the upper bound of the communication rate, we further obtain the asymptotic lower bound on $ \mathbb{E}[P_{k}^{-}] $ as follows. 
	\begin{lemma}\label{l7}
		The mean error covariance $ \mathbb{E}[P_{k}^{-}]  $ is bounded by
		\begin{equation*}
		g_{R_{1}}^{k}(\Sigma)\leq\mathbb{E}[P_{k}^{-}]
		\end{equation*}
		where $ R_{1}^{-1}=\bar{\gamma} R^{-1}+(\lambda-\bar{\gamma})(R+Z^{-1})^{-1} $.
		
		The asymptotic lower bound on $ \mathbb{E}[P_{k}^{-}]  $ is
		\begin{equation*}
		\underline{X}_{cl}\leq\liminf\limits_{k\rightarrow\infty}\mathbb{E}[P_{k}^{-}], 
		\end{equation*}
		where 	$ \underline{X}_{cl}>0$ is the unique solution to		
		$ 	\underline{X}_{cl}=g_{R_{1}}(\underline{X}_{cl}) $.

		Meanwhile, one has $ \underline{X}_{cl}\geq X_{0} $, where $ X_{0}  $ from \eqref{e15} 	is the lower bound of $ \mathbb{E}[P_{k}^{-}] $ for all  schedulers satisfied \eqref{eq62}.
	\end{lemma}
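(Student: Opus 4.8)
The plan is to mirror the structure of Lemma \ref{l2}, but to route the entire argument through the \emph{information form} of the filter so that the only randomness entering the recursion is the exogenous channel/triggering pattern, and to replace the realization-dependent communication rate by its upper bound $\bar\gamma$ from \eqref{eq17}. Writing $J_k^- \triangleq (P_k^-)^{-1}$ and $J_k \triangleq P_k^{-1}$, the measurement update of Theorem \ref{t3} is equivalent to the additive rule
\begin{equation*}
J_k = J_k^- + \eta_k C^\top\big(\varepsilon_k R^{-1} + (1-\varepsilon_k)(R+Z^{-1})^{-1}\big)C .
\end{equation*}
Because $\eta_k$ is exogenous and independent of $\varepsilon_k$, taking expectations gives the \emph{exact, affine} update $\mathbb{E}[J_k] = \mathbb{E}[J_k^-] + C^\top R_{1,k}^{-1} C$, where $R_{1,k}^{-1} = \gamma_k R^{-1} + (\lambda-\gamma_k)(R+Z^{-1})^{-1}$ and $\gamma_k = \mathbb{E}[\eta_k\varepsilon_k]$ is the marginal communication rate at time $k$. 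The time update $P_{k+1}^- = h(P_k)$ becomes $J_{k+1}^- = \Phi(J_k)$ with $\Phi(J) \triangleq (AJ^{-1}A^\top + Q)^{-1}$.

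First I would record two facts about $\Phi$: it is monotone increasing and matrix-concave on the positive-definite cone. Monotonicity is immediate; for concavity I would apply the matrix inversion lemma to write $\Phi(J) = Q^{-1} - Q^{-1}A(J + A^\top Q^{-1}A)^{-1}A^\top Q^{-1}$, whence $\Phi$ is a constant minus a congruence of the operator-convex map $J \mapsto (J + A^\top Q^{-1}A)^{-1}$, hence concave (this parallels the concavity of $g_{\theta,W}$ in Appendix \ref{a2}, the $Q\ge 0$ case following by continuity). Next I would bound the rate: the second inequality in Lemma \ref{l4} bounds $\gamma_k$ by a quantity that is increasing in $\mathbb{E}[P_k^-]$, so with $\mathbb{E}[P_k^-]\le\bar X_{cl}$ from Lemma \ref{l5} I obtain $\gamma_k \le \bar\gamma$, and therefore $R_{1,k}^{-1} \le R_1^{-1}$ since the coefficient is increasing in the rate (as $R^{-1} \ge (R+Z^{-1})^{-1}$).

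The core is then an induction showing $\mathbb{E}[J_k^-] \le \underline{J}_k^- \triangleq (g_{R_1}^k(\Sigma))^{-1}$, the deterministic information iterate of the $g_{R_1}$ recursion. The base case is equality. For the step, Jensen applied to the concave $\Phi$ gives $\mathbb{E}[J_{k+1}^-] = \mathbb{E}[\Phi(J_k)] \le \Phi(\mathbb{E}[J_k^-] + C^\top R_{1,k}^{-1}C)$; monotonicity of $\Phi$ together with $R_{1,k}^{-1}\le R_1^{-1}$ and the inductive hypothesis $\mathbb{E}[J_k^-]\le\underline{J}_k^-$ then yields $\mathbb{E}[J_{k+1}^-] \le \Phi(\underline{J}_k^- + C^\top R_1^{-1}C) = \underline{J}_{k+1}^-$. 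Finally, the convexity of the matrix inverse gives $(\mathbb{E}[P_k^-])^{-1} \le \mathbb{E}[(P_k^-)^{-1}] = \mathbb{E}[J_k^-] \le (g_{R_1}^k(\Sigma))^{-1}$, i.e.\ $\mathbb{E}[P_k^-] \ge g_{R_1}^k(\Sigma)$. Using the properties in Appendix \ref{a2} that $g_{R_1}$ has the unique fixed point $\underline X_{cl}$ and $g_{R_1}^k(\Sigma)\to\underline X_{cl}$, taking $\liminf$ delivers $\underline X_{cl}\le \liminf_{k\rightarrow\infty} \mathbb{E}[P_k^-]$. For the universal bound, note $R_1^{-1}\le \bar\gamma R^{-1} + (\lambda-\bar\gamma)R^{-1} = \lambda R^{-1}$, so $R_1 \ge R/\lambda$ and $g_{R_1}\ge g_{R/\lambda}$ pointwise; comparing the fixed points of these monotone maps gives $\underline X_{cl}\ge X_0$ with $X_0$ from \eqref{e15}.

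The main obstacle I anticipate is handling the closed-loop coupling between the triggering rate and the random covariance: unlike the open-loop case of Lemma \ref{l2}, $\gamma_k$ is realization-dependent, and the inequality $\gamma_k\le\bar\gamma$ is only available \emph{in expectation}, not path-wise. This is exactly why the argument must be carried out at the level of the mean information $\mathbb{E}[J_k^-]$, where the measurement update is linear and only the marginal rate appears, rather than conditionally. The displayed finite-$k$ inequality then holds once $\mathbb{E}[P_k^-]\le\bar X_{cl}$, and otherwise the statement is to be read asymptotically, which is all that the $\liminf$ bound requires.
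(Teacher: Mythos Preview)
Your proposal is correct and follows essentially the same information-filter route that the paper uses for Lemma~\ref{l2} (to which the paper simply defers for Lemma~\ref{l7}): rewrite the measurement update additively in $J_k=(P_k)^{-1}$, take expectations, use operator concavity of the time-update map together with Jensen, iterate via monotonicity, and return to covariances through convexity of the inverse. Your final step $R_1\ge R/\lambda\Rightarrow \underline X_{cl}\ge X_0$ is exactly the paper's Proposition~\ref{p1}(5) argument.

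The one place you go beyond the paper is worth noting. In the open-loop proof the rate $\gamma$ is a constant (the system sits at steady state), so $R_1$ is fixed and the induction is immediate. In the closed-loop case the marginal rate $\gamma_k=\lambda\,\mathbb{E}[\varepsilon_k]$ genuinely varies with $k$ through $P_k^-$, and the paper's ``similar to Lemma~\ref{l2}'' does not spell out how this is absorbed. Your device---bounding $\gamma_k$ via Lemma~\ref{l4} and Lemma~\ref{l5}, hence $R_{1,k}^{-1}\le R_1^{-1}$, and then pushing this through the monotone $\Phi$---is the right fix, and you are also right that the finite-$k$ inequality as stated needs $\mathbb{E}[P_k^-]\le\bar X_{cl}$ (equivalently an appropriate initial condition), while the $\liminf$ conclusion holds regardless. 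This is a refinement of, not a departure from, the paper's argument.
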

	\begin{proof}
		The proof of Lemma \ref{l5} and Lemma \ref{l7} is similar to that of Lemma \ref{l3} as shown in Appendix \ref{a3}. Hence, we omit this part.
	\end{proof}

	In the closed-loop scenario, since there is no closed-form of $ \gamma $ given $ Z $, we relax the objective in Problem \ref{pi}  by the upper bound $ \bar{\gamma} $ and the asymptotic upper bound on the mean error covariance $ \bar{X}_{cl} $. 	
	\begin{problem}\label{q4}
		\begin{equation*}
		\begin{split}
		\min \limits_{Z}\, & 	\bar{\gamma}  \\
		{\rm s.t.} \, & \bar{X}_{cl} \leq M,	Z\geq	  0.
		\end{split}		
		\end{equation*}	 
	\end{problem}
	We also take Lemma \ref{l3} to relax Problem \ref{q4} to  Problem \ref{p4} which is equivalent to Problem \eqref{e44} in Theorem \ref{t4} as further proved. Fig. \ref{fp2} shows the relationship between each problem directly.
	\begin{figure}[!t]
		\centering
		\includegraphics[width=0.9\columnwidth]{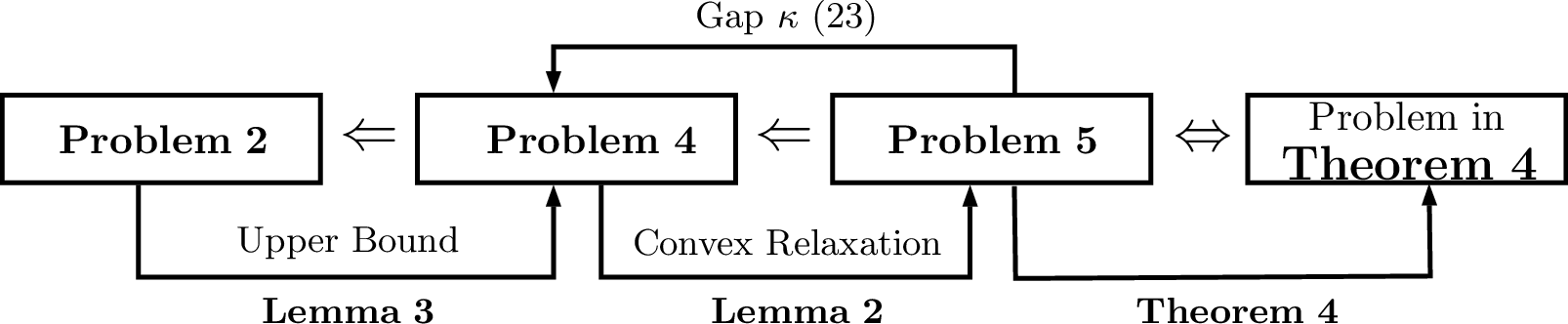}
		\caption{The relationship between each problem the closed-loop case.}
		\label{fp2}
	\end{figure} 
	\begin{problem}\label{p4}
		\begin{equation*}
		\begin{split}
		\min \limits_{T,Z}\, & 	{\rm tr} (TZ)  \\ 
		{\rm s.t.} \, & \bar{X}_{cl} \leq M,	C\bar{X}_{cl} C^{}+R\leq T, \bar{X}_{cl}=g_{\lambda,R+Z^{-1}}(\bar{X}_{cl}),	Z\geq	  0.
		\end{split}		
		\end{equation*}	 
	\end{problem}
	
	\begin{theorem}\label{t4}
		Problem \ref{p4} is equivalent to
		\begin{equation}\label{e44}
		\begin{split}
		\min\limits_{Z,X,S}\, & f(X,Z)\triangleq {\rm tr} \left((CX C^{\top}+R)Z\right)\\
		{\rm s.t.} &  (X,Z,S)\in \mathcal{S},
		\end{split}		
		\end{equation}	
		where $ \mathcal{S} $ represent the constraints of the decision variables, i.e.,
		\begin{equation}\label{q19}	
		\begin{split}
		\mathcal{S}\triangleq &\{ (X,Z,S):\\
		&\left[ \begin{array}{cc}
		X& I \\ 
		I& S
		\end{array} \right] \geq 0,\left[ \begin{array}{cc}
		S& I \\ 
		I& M
		\end{array} \right] \geq 0,	\Psi(S,Z) \geq 0,	Z\geq 0 \}.
		\end{split}
		\end{equation}	
	\end{theorem}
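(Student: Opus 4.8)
The plan is to follow the two-step template of Theorem~\ref{t2}: first replace the implicit fixed-point constraint by a tractable Riccati \emph{inequality}, and then convert that inequality into the linear matrix inequality $\Psi(S,Z)\ge 0$. The only genuinely new feature compared with Theorem~\ref{t2} is that the objective is now bilinear in the decision variables rather than linear, so the auxiliary variable $X$ (playing the role of $\bar{X}_{cl}$ inside the objective) must be introduced and its behavior controlled with care.

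\textbf{Step 1 (eliminate $T$ and relax the fixed point).} First I would remove the epigraph variable $T$: since $T\ge C\bar{X}_{cl}C^{\top}+R$ and $Z\ge 0$ give ${\rm tr}(TZ)\ge{\rm tr}\big((C\bar{X}_{cl}C^{\top}+R)Z\big)$, the minimum over $T$ is attained at $T=C\bar{X}_{cl}C^{\top}+R$, so Problem~\ref{p4} shares its value with $\min_{Z}{\rm tr}\big((C\bar{X}_{cl}C^{\top}+R)Z\big)$ subject to $\bar{X}_{cl}\le M$, $\bar{X}_{cl}=g_{\lambda,R+Z^{-1}}(\bar{X}_{cl})$, $Z\ge 0$. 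Next I would replace $\bar{X}_{cl}$ by a free matrix $X$ subject to the inequality $g_{\lambda,R+Z^{-1}}(X)\le X$. Using the monotonicity and uniqueness of the fixed point of $g_{\theta,W}$ (Appendix~\ref{a2}, Lemma~\ref{l5}), the iteration $X_{n+1}=g_{\lambda,R+Z^{-1}}(X_n)$ started at any such $X$ is non-increasing and converges to $\bar{X}_{cl}$, so every feasible $X$ dominates $\bar{X}_{cl}$. Because ${\rm tr}\big((CXC^{\top}+R)Z\big)$ is monotonically increasing in $X$ for $Z\ge 0$, the relaxed minimum is attained at the smallest feasible $X$, namely $X=\bar{X}_{cl}$; hence the relaxation is tight and introduces no gap.

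\textbf{Step 2 (Riccati inequality to LMI).} I would then pass to the information (inverse-covariance) domain with $S=X^{-1}$, and apply the matrix inversion lemma together with repeated Schur complements---exactly as in the proof of Theorem~\ref{t2}---to show $\Psi(S,Z)\ge 0 \iff g_{\lambda,R+Z^{-1}}(S^{-1})\le S^{-1}$, where the blocks $Q^{-1}$ and $Z+R^{-1}$ carry the process- and measurement-noise terms and the $\sqrt{\lambda}/\sqrt{1-\lambda}$ split encodes the Bernoulli availability. The quality bound $\bar{X}_{cl}\le M$ becomes $S^{-1}\le M$, encoded by $\begin{bmatrix} S & I \\ I & M\end{bmatrix}\ge 0$, while the coupling $\begin{bmatrix} X & I \\ I & S\end{bmatrix}\ge 0$ enforces $X\ge S^{-1}$, tying the objective's $X$ to the inverse covariance $S$. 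Since the objective increases in $X$, its minimizer sets $X=S^{-1}$, which together with Step~1 recovers $X=\bar{X}_{cl}$ and assembles precisely the constraint set $\mathcal{S}$ of \eqref{q19}.

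\textbf{Main obstacle.} The hard part is that, unlike Theorem~\ref{t2} where ${\rm tr}(\Pi Y)$ is linear with $\Pi$ fixed, here both the constraint $g_{\lambda,R+Z^{-1}}(X)\le X$ and the objective depend on $X$, and both favor a smaller $X$; I would therefore argue tightness for the two dependences simultaneously, confirming that the common minimizer is the Riccati fixed point and that the introduction of the extra variable $X$ coupled through $\begin{bmatrix} X & I \\ I & S\end{bmatrix}\ge 0$ does not perturb the optimum. The remaining technical burden is the Schur-complement reduction producing $\Psi$, inherited from Appendix~\ref{a4}. I would conclude by verifying both inclusions---each feasible triple $(X,Z,S)\in\mathcal{S}$ yields a feasible pair of Problem~\ref{p4} with equal objective, and conversely---to establish the claimed equivalence.
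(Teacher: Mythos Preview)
Your proposal is correct and follows essentially the same approach as the paper's proof in Appendix~\ref{a6}: eliminate $T$ via monotonicity, replace the fixed-point constraint $\bar X_{cl}=g_{\lambda,R+Z^{-1}}(\bar X_{cl})$ by the Riccati inequality $g_{\lambda,R+Z^{-1}}(X)\le X$ using the monotone-convergence argument, reuse the Schur-complement derivation of Theorem~\ref{t2} to obtain $\Psi(S,Z)\ge 0$, and finally relax $X=S^{-1}$ to $X\ge S^{-1}$ (the block $\bigl[\begin{smallmatrix}X&I\\I&S\end{smallmatrix}\bigr]\ge 0$), arguing tightness from the monotonicity of ${\rm tr}((CXC^{\top}+R)Z)$ in $X$. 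The paper's write-up is terser---it keeps $T$ in the intermediate equivalence and eliminates it implicitly---but the logical content and the key ingredients are identical to yours.
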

	\begin{proof}
		See Appendix \ref{a6}.		
	\end{proof}

	\subsection{Jointly Constrained Biconvex Programming}
The objective of Problem \eqref{e44} is a bilinear function w.r.t. the psd cone $ X $ and $ Z $. Linear operations, such as matrix vectorization, are applied to convert it into a general form of the jointly biconvex program \cite{al1983jointly}. 
Define
\begin{equation*}\label{eq36}
\tilde{x}\triangleq \ve(X),	\tilde{y}\triangleq \ve(Z)\footnote{For symmetric matrix, half vectorization is a bijection of vectorization \cite{henderson1979vec}. To simplify the notation, we use vectorization throughout the paper. In real implementation of the algorithms, half vectoeriation representation is used to reduce the dimension of the optimization variables.}.
\end{equation*}
It is obvious that $ \tilde{x} $ and $\tilde{y}  $ are bijections of $ X $ and $ Z $.  Thus, the optimization parameters can be changed to $ (\tilde{x} ,\tilde{y}, S) $ and the objective function is a bilinear function w.r.t. $ \tilde{x}$ and $\tilde{y} $ as follows
\begin{equation}\label{e21}
f(S,P)=\phi(\tilde{x},\tilde{y})=\tilde{x}^{\top}G\tilde{y}+g(\tilde{y}),
\end{equation}
where
\begin{equation}
G=C^{\top} \otimes C^{\top},g(\tilde{y})={\rm tr}(RZ).
\end{equation}

\begin{lemma}[Boundary Solution]\cite{al1983jointly}[Theorem 1] \label{lb}The jointly biconvex Problem \eqref{e44} has boundary solutions if the feasible region is compact.
\end{lemma}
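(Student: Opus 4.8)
The plan is to exploit the fact that, although the objective $\phi(\tilde{x},\tilde{y})=\tilde{x}^{\top}G\tilde{y}+g(\tilde{y})$ in \eqref{e21} is bilinear, it is \emph{affine} (indeed linear) in the block $\tilde{x}$ once the block $\tilde{y}$ is frozen; combined with convexity and compactness of the feasible region $\mathcal{S}$ in \eqref{q19}, this will force a minimizer onto the boundary $\partial\mathcal{S}$. First I would record two structural facts. The set $\mathcal{S}$ is convex and closed: each of its defining conditions is a linear matrix inequality in $(X,S)$, in $S$, or in $(S,Z)$ (note that $\Psi(S,Z)$ in \eqref{eq10} is affine in its arguments and that $Z\geq 0$ is a cone constraint), so $\mathcal{S}$ is an intersection of spectrahedra. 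Together with the compactness hypothesis, $\mathcal{S}$ is a nonempty compact convex set, and $\phi$ is continuous; hence by the Weierstrass extreme value theorem a global minimizer $(\tilde{x}^{\star},\tilde{y}^{\star},S^{\star})$ exists.

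Next I would freeze $\tilde{y}=\tilde{y}^{\star}$ and $S=S^{\star}$ and consider the slice $\Omega\triangleq\{\tilde{x}:(\tilde{x},\tilde{y}^{\star},S^{\star})\in\mathcal{S}\}$. As the intersection of the compact convex set $\mathcal{S}$ with an affine subspace, $\Omega$ is itself compact and convex, and it is nonempty since it contains $\tilde{x}^{\star}$. On $\Omega$ the objective reduces to the affine map $\tilde{x}\mapsto \tilde{x}^{\top}(G\tilde{y}^{\star})+g(\tilde{y}^{\star})$. By the standard fact that an affine functional on a nonempty compact convex set attains its minimum at an extreme point (a consequence of Krein--Milman / the Bauer minimum principle), there is an extreme point $\tilde{x}^{\star\star}$ of $\Omega$ achieving $\phi(\tilde{x}^{\star\star},\tilde{y}^{\star})\leq \phi(\tilde{x}^{\star},\tilde{y}^{\star})$. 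Since the right-hand side is the global optimum, the feasible triple $(\tilde{x}^{\star\star},\tilde{y}^{\star},S^{\star})$ is again a global minimizer.

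Finally I would argue $(\tilde{x}^{\star\star},\tilde{y}^{\star},S^{\star})\in\partial\mathcal{S}$. If instead it lay in the interior of $\mathcal{S}$, some open ball around it would be contained in $\mathcal{S}$; intersecting that ball with the affine slice $\{(\tilde{x},\tilde{y}^{\star},S^{\star})\}$ would produce an open neighbourhood of $\tilde{x}^{\star\star}$ inside $\Omega$, contradicting that $\tilde{x}^{\star\star}$ is an extreme point of $\Omega$. Hence a boundary optimizer exists, which is precisely the assertion of Lemma~\ref{lb}, matching \cite[Theorem~1]{al1983jointly} specialized to the separately-linear objective \eqref{e21}.

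I expect the delicate step to be the last one: transferring extremality within the possibly low-dimensional slice $\Omega$ to membership in $\partial\mathcal{S}$ of the full-dimensional feasible set, while carrying along the auxiliary variable $S$ that does not appear in the objective. One must verify that freezing $S=S^{\star}$ is legitimate --- it is, because the statement only asserts the existence of \emph{some} boundary minimizer, not joint optimality in $S$ --- and that a degenerate slice (for instance a singleton) is handled, in which case the single point is trivially extreme and the same interior-ball argument applies verbatim. Everything else is routine convex analysis.
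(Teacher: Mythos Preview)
The paper does not prove Lemma~\ref{lb} at all; it merely quotes \cite[Theorem~1]{al1983jointly} and moves on. Your proposal therefore supplies what the paper omits, and it does so correctly: existence of a minimizer via Weierstrass on the compact convex feasible set, then freezing $(\tilde{y}^{\star},S^{\star})$ so that the objective becomes affine in $\tilde{x}$ on the compact convex slice $\Omega$, invoking the Bauer minimum principle to relocate the optimizer to an extreme point of $\Omega$, and finally the interior-ball argument to push that extreme point onto $\partial\mathcal{S}$. This is precisely the mechanism behind the cited result in \cite{al1983jointly}, specialized here to the separately-affine structure of \eqref{e21}, so your route is consistent with the source the paper defers to. Your closing caveats are apt but not problematic: freezing the auxiliary variable $S$ is harmless because the claim is existential, and the degenerate singleton-slice case is covered by the same contradiction (an interior point of $\mathcal{S}$ would force $\Omega$ to contain a full ball in the $\tilde{x}$-coordinates, which is impossible if $\Omega$ is a singleton).
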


The jointly constrained biconvex problem can be solved numerically by some methods. Branch-and-bound (B\&B) algorithm proposed by \cite{al1983jointly} is one of them and produces a global optimal solution. The B\&B algorithm splits the feasible region into several subregions and produces an increasing convex underestimator with an associated decreasing upper bound in the subregions. With the finer splitting, it is proved that the limit of the convex underestimator converges to the limit of the upper bound; thus the global solution is obtained.

Note that the feasible region being compact is a necessary condition to implement the jointly constrained biconvex programming. The main difficulty remains that the initial set $ \mathcal{S} $ is unbounded, i.e., there is no obvious upper bound of the elements of matrix $ Z $ ($  \tilde{y} $) in the constraints. The upper bound of $ Z $ is from the objective which aims to minimize $ {\rm tr} \left((CX C^{\top}+R)Z\right)  $. Therefore, we derive a necessary condition w.r.t. $ Z $ for the optimal solution as an upper bound requirement. The details are shown in Lemma \ref{ll7}.
	\begin{lemma}\label{ll7}
		The optimal solution $ (X^{*},Z^{*}) $ belongs to the set $ 	\Omega\triangleq\{(X,Z): j_{ij}\leq X_{ij}\leq J_{ij},	d_{ij} \leq  Z_{ij}\leq z^{*}\} $,	where
		\begin{align*}
		j_{ij}= &\left\lbrace \begin{array}{l}
		(X_{0})_{ii},\text{ if } i=j, \\ 
		-(M_{ii}M_{jj})^{\frac{1}{2}}, \text{else},
		\end{array} \right. 	J_{ij}=
		(M_{ii}M_{jj})^{\frac{1}{2}},\\
		d_{ij}= &\left\lbrace \begin{array}{l}
		0,\text{ if } i=j, \\ 
		-z^{*}, \text{else},
		\end{array} \right.	 
		\end{align*}
		and $ z^{*} $ satisfies the following optimization problem
		\begin{equation}\label{e38}
		\begin{split}
		z^{*}=&\min\limits_{Z,S}\dfrac{{\rm tr}\left((CM C^{\top}+R)Z\right)}{{\rm tr} (CX_{0} C^{\top}+R)}\\
		{\rm s.t.} & \left[ \begin{array}{cc}
		S& I \\ 
		I& M
		\end{array} \right] \geq 0,	\Psi(S,Z) \geq 0,	Z\geq 0.
		\end{split}		
		\end{equation}
	\end{lemma}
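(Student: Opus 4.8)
The plan is to show that the optimal pair $(X^{*},Z^{*})$ lies in the box $\Omega$ by bounding the entries of $X^{*}$ and of $Z^{*}$ separately: the bounds on $X^{*}$ and the lower bounds on $Z^{*}$ will follow purely from feasibility, while the upper bound on $Z^{*}$ will require optimality. Throughout I use that $X^{*}=\bar{X}_{cl}$ is the Riccati fixed point associated with the optimal parameter $Z^{*}$, and that $X^{*}\geq 0$ and $Z^{*}\geq 0$, so the positive semidefinite Cauchy--Schwarz inequality $|W_{ij}|\leq\sqrt{W_{ii}W_{jj}}$ applies to each.

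For $X^{*}$, I would read off the diagonal lower bound from Lemma \ref{l7}: since $X_{0}$ of \eqref{e15} lower-bounds the mean error covariance for every admissible schedule, the asymptotic level obeys $X^{*}=\bar{X}_{cl}\geq X_{0}$, hence $X^{*}_{ii}\geq (X_{0})_{ii}=j_{ii}$. The upper bound comes from the quality constraint $\bar{X}_{cl}\leq M$, which gives $X^{*}_{ii}\leq M_{ii}$; substituting this into the Cauchy--Schwarz bound yields $|X^{*}_{ij}|\leq\sqrt{X^{*}_{ii}X^{*}_{jj}}\leq\sqrt{M_{ii}M_{jj}}=J_{ij}$, which simultaneously delivers the off-diagonal lower bound $X^{*}_{ij}\geq-\sqrt{M_{ii}M_{jj}}=j_{ij}$ for $i\neq j$. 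The lower bounds on $Z^{*}$ are obtained the same way from $Z^{*}\geq 0$: the diagonal gives $Z^{*}_{ii}\geq 0=d_{ii}$, and once the diagonal upper bound $Z^{*}_{ii}\leq z^{*}$ is available, the off-diagonal obeys $Z^{*}_{ij}\geq-\sqrt{Z^{*}_{ii}Z^{*}_{jj}}\geq-z^{*}=d_{ij}$.

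The crux is the upper bound $Z^{*}_{ii}\leq z^{*}$, where optimality enters. I would sandwich the optimal value $f^{*}={\rm tr}\big((CX^{*}C^{\top}+R)Z^{*}\big)$. For the lower estimate, $X^{*}\geq X_{0}$ and $Z^{*}\geq 0$ give, by monotonicity of the trace of a product of positive semidefinite matrices, $f^{*}\geq{\rm tr}\big((CX_{0}C^{\top}+R)Z^{*}\big)$. For the upper estimate I would exhibit a cheap feasible competitor: the minimizer $\hat{Z}$ of the relaxed program \eqref{e38} keeps the fixed point at or below $M$, so the triple built from $\hat{Z}$ is feasible for \eqref{e44}, and since $X^{*}\leq M$ its cost is at most ${\rm tr}\big((CMC^{\top}+R)\hat{Z}\big)=z^{*}\,{\rm tr}(CX_{0}C^{\top}+R)$. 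Optimality of $f^{*}$ then closes the chain and, together with $Z^{*}\geq 0$, produces the entrywise bound $Z^{*}_{ii}\leq z^{*}$.

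I expect the main obstacle to be exactly this last translation from a scalar bound on the objective value to an entrywise bound on $Z^{*}$, and the verification that \eqref{e38}---which drops the block constraint coupling $X$ with $S$ and replaces $X$ by its extreme values $M$ and $X_{0}$ in the numerator and denominator---still certifies admissibility of the competitor for the full problem \eqref{e44}. This relies on the Schur-complement/LMI encoding from Theorem \ref{t4}: one has to confirm that $\Psi(S,Z)\geq 0$ together with $\left[\begin{smallmatrix}S & I\\ I & M\end{smallmatrix}\right]\geq 0$ indeed enforces $g_{\lambda,R+Z^{-1}}(\bar{X}_{cl})=\bar{X}_{cl}\leq M$, so that the point generated by $\hat{Z}$ is genuinely feasible and the sandwich inequality is valid.
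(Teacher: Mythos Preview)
Your plan matches the paper's: the $X$–bounds are obtained from $X_{0}\leq\bar X_{cl}\leq X\leq M$ together with the positive–semidefinite Cauchy--Schwarz inequality on $2\times 2$ principal minors, and the $Z$ upper bound is obtained by sandwiching the optimal value between the competitor cost ${\rm tr}\big((CMC^{\top}+R)\hat Z\big)$ (the numerator of \eqref{e38}, using that $(M,\hat Z,\hat S)$ is feasible for \eqref{e44}) and a lower estimate tied to ${\rm tr}(CX_{0}C^{\top}+R)$, after which the paper reads off $Z^{*}_{ii}\leq{\rm tr}(Z^{*})\leq z^{*}$. The step you single out as the crux---passing from the scalar sandwich on the objective to the entrywise bound on $Z^{*}$---is exactly the place where the paper's argument is tersest as well.
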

	\begin{proof}
		See Appendix \ref{a7}.		
	\end{proof}
	Adding a linear constraint $ \tilde{z}=G\tilde{y} $ to those defining $ \mathcal{S} $, the bounds
	\begin{equation}\label{e27}
	\begin{split}
	n_{k}=&\min\{(G\tilde{y})_{k}:d_{ij} \leq  Z_{ij}\leq z^{*}\},\\
	N_{k}=&\max\{(G\tilde{y})_{k}:d_{ij} \leq  Z_{ij}\leq z^{*}\} 
	\end{split}	
	\end{equation}
	replace the bounds on $ \tilde{y}  $ in defining $ \Omega $.
	
Below we detail how to construct an increasing convex underestimator with an associated decreasing upper bound for the \textit{jointly constrained biconvex} problem. We first introduce a convex envelope of $ \tilde{x}^{\top}\tilde{z} $ over $ \Omega $. It is the pointwise supremum of all convex functions which underestimate $ \tilde{x}^{\top}\tilde{z} $ over $ \Omega $, denoted by $ \Vex_{\Omega}(\tilde{x}^{\top}\tilde{z} ) $. The main results are from \cite{al1983jointly}.
	\begin{lemma}[Convex Envelop]\cite[Corollary]{al1983jointly} \label{l8}
		If $ x,y\in\mathbb{R}^{n} $ and $ (x,y)\in\Omega $, where
		$	\Omega=\{(x,y):t\leq x\leq T,
		d \leq y\leq D\},
		\Omega_{i}=\{(x_{i},y_{i}):t_{i}\leq x_{i}\leq T_{i},
		d_{i} \leq y_{i}\leq D_{i}\},$
		\begin{align*}
		\Vex_{\Omega}(x^{\top}y )=&\sum\limits_{i=1}^{n} \Vex_{\Omega_{i}}(x_{i}y_{i} ),\\
		\Vex_{\Omega_{i}}(x_{i}y_{i} )\
		=&\max\{d_{i}x_{i}+t_{i}y_{i}-d_{i}t_{i},
		D_{i}x_{i}+T_{i}y_{i}-T_{i}D_{i}\}.
		\end{align*}
		Moreover, $ x^{\top}y\geq\Vex_{\Omega}(x^{\top}y ) $ for all $ (x,y)\in\Omega $, and the $ ``= ''$ holds iff $ (x,y)\in\partial\Omega $.
	\end{lemma}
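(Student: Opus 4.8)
The plan is to prove the convex-envelope characterization in three stages: reduce the multivariate statement to a scalar one using the product structure $\Omega=\prod_{i}\Omega_i$, settle the scalar bilinear envelope explicitly, and then read off when the envelope is tight. Throughout I write $\phi_i(x_i,y_i)\triangleq\max\{d_ix_i+t_iy_i-d_it_i,\ D_ix_i+T_iy_i-T_iD_i\}$ and $f_i(x_i,y_i)\triangleq x_iy_i$, so the target identity is $\Vex_{\Omega_i}(f_i)=\phi_i$ together with additivity.

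First I would treat a single coordinate. The function $\phi_i$ is convex on $\Omega_i$ because it is a pointwise maximum of two affine functions. It underestimates $f_i$ because the two gaps factor as $x_iy_i-(d_ix_i+t_iy_i-d_it_i)=(x_i-t_i)(y_i-d_i)\ge 0$ and $x_iy_i-(D_ix_i+T_iy_i-T_iD_i)=(x_i-T_i)(y_i-D_i)\ge 0$ on $\Omega_i$, hence $\phi_i\le f_i$ and therefore $\phi_i\le\Vex_{\Omega_i}(f_i)$ since the envelope is the largest convex underestimator. The crux is the reverse inequality (maximality). For this I would note that the anti-diagonal joining $(T_i,d_i)$ and $(t_i,D_i)$ splits $\Omega_i$ into two triangles, on each of which one affine branch is active and interpolates $f_i$ at the three vertices of that triangle. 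Consequently every $(x_i,y_i)\in\Omega_i$ is a convex combination $\sum_j\alpha_{ij}v_{ij}$ of box vertices at which $\phi_i$ coincides with $f_i$, and affineness of the active branch gives $\phi_i(x_i,y_i)=\sum_j\alpha_{ij}f_i(v_{ij})\ge\Vex_{\Omega_i}(f_i)$. Combining the two bounds yields $\phi_i=\Vex_{\Omega_i}(f_i)$.

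Next I would establish additivity over coordinates. Since $x^{\top}y=\sum_i f_i$ is separable and $\Omega$ is a product, the bound $\sum_i\Vex_{\Omega_i}(f_i)\le\Vex_{\Omega}(x^{\top}y)$ is immediate: the left-hand side is convex and bounded above by $x^{\top}y$, hence no larger than the envelope. For the reverse inequality I would take, for each $i$, the triangle representation $(x_i,y_i)=\sum_j\alpha_{ij}v_{ij}$ furnished above, and form the tensorized weights $\alpha_J=\prod_i\alpha_{i j_i}$ on the product points $w_J=(v_{1j_1},\dots,v_{nj_n})\in\Omega$. These weights sum to one, reproduce $(x,y)$ coordinatewise, and by separability satisfy $\sum_J\alpha_J\,(x^{\top}y)\big|_{w_J}=\sum_i\sum_j\alpha_{ij}f_i(v_{ij})=\sum_i\Vex_{\Omega_i}(f_i)$, giving $\Vex_{\Omega}(x^{\top}y)\le\sum_i\Vex_{\Omega_i}(f_i)$. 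Equality, i.e. $\Vex_{\Omega}(x^{\top}y)=\sum_i\Vex_{\Omega_i}(x_iy_i)$, follows.

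Finally, the tightness claim. Combining the two stages, the total gap is $x^{\top}y-\Vex_{\Omega}(x^{\top}y)=\sum_i\min\{(x_i-t_i)(y_i-d_i),\,(x_i-T_i)(y_i-D_i)\}$, a sum of nonnegative terms; the $i$-th term vanishes exactly when $(x_i,y_i)$ lies on an edge of $\Omega_i$, i.e. on $\partial\Omega_i$, so the envelope is attained precisely on the boundary, which is the asserted condition. The main obstacle I anticipate is the maximality step for the scalar envelope: convexity and underestimation are routine factorizations, whereas proving that no larger convex underestimator exists requires the explicit triangulation and vertex-interpolation argument (equivalently, exhibiting a supporting convex combination of vertices at every point). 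The secondary delicate point is the product-measure construction for additivity, where one must verify both that the tensorized weights reproduce $(x,y)$ and that the separable objective decouples across the product sum.
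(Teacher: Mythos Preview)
Your argument is correct and self-contained; note that the paper does not prove this lemma at all---it is quoted from Al-Khayyal and Falk (1983) as a cited result. Your treatment of the scalar case (factorizing the two gaps as $(x_i-t_i)(y_i-d_i)$ and $(x_i-T_i)(y_i-D_i)$, then using the anti-diagonal triangulation and vertex interpolation to establish maximality) is the standard route and coincides with the original source. The additivity step via tensorized convex weights is also right, and relies exactly on the separability of $x^\top y$ and the product structure of $\Omega$.

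One slip in the final tightness step. You correctly compute the total gap
\[
x^\top y-\Vex_\Omega(x^\top y)=\sum_{i=1}^{n}\min\bigl\{(x_i-t_i)(y_i-d_i),\,(x_i-T_i)(y_i-D_i)\bigr\},
\]
and correctly observe that the $i$-th summand vanishes iff $(x_i,y_i)\in\partial\Omega_i$. But the sum vanishes iff this holds for \emph{every} $i$, which for $n\ge 2$ is strictly stronger than $(x,y)\in\partial\Omega$: the boundary of the product box is the set where \emph{at least one} coordinate hits a bound. So the ``iff $(x,y)\in\partial\Omega$'' as stated in the lemma (and repeated in your last sentence) is imprecise; your computation actually yields the sharper equality set $\prod_i\partial\Omega_i$. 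This is in fact what the paper's branch-and-bound procedure uses when it tests $\Delta_i^k>0$ coordinatewise rather than checking membership in $\partial\Omega$.
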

	
	Let $ \psi^{1}(\tilde{x},\tilde{y})= \Vex_{\Omega^{1}}(\tilde{x}^{T}\tilde{z} )+g(\tilde{y})$ and $\Omega^{1}=\Omega  $ from Lemma \ref{ll7}. Note that $ \psi^{1} $ is the convex underestimator of $ \phi $ and, furthermore, agrees with $ \phi $ on $ \partial\Omega $. 
	Solving the convex problem	 $ \upsilon^{1}=\min\psi^{1}(\tilde{x},\tilde{y}),{\rm s.t.}  (X,Z,S)\in \mathcal{S}\cap\Omega^{1} $ (denoted as $ \mathcal{P}^{1} $) yields the optimal value $ \upsilon^{1}(\upsilon^{11})=\psi^{1}(\tilde{x}^{1},\tilde{y}^{1}) $. If $ \upsilon^{1}=\Upsilon^{1}$, where $ \Upsilon^{1}=\phi(\tilde{x}^{1},\tilde{y}^{1}) $, $ (\tilde{x}^{1},\tilde{y}^{1})  $ is a solution to \eqref{e44}.
	Otherwise, one has
	\begin{equation}\label{eq43}
	\Delta_{i}^{1}=\tilde{x}^{1}_{i}\tilde{z}^{1}_{i}-\Vex_{\Omega_{i}}(\tilde{x}^{1}_{i}\tilde{z}^{1}_{i})>0, \text{ for some  } i.
	\end{equation}
	We choose the index $\rm I $ which produces the largest difference $ \Delta_{i}^{1} $, and split the $ \rm I- $th rectangle into four subrectangles $ \Omega^{2t} $
	according to the rule illustrated in Fig \ref{fr}.
	\begin{figure}[!t]
		\centering
		\includegraphics[width=2in]{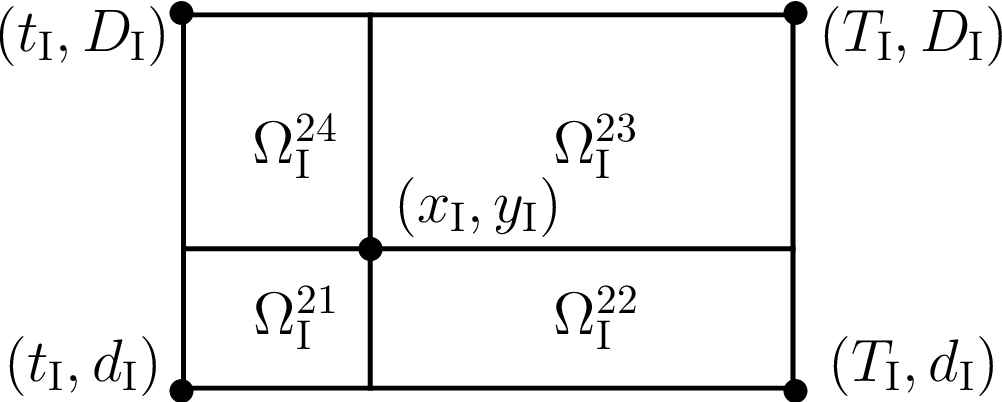}	
		\caption{Splitting $ \Omega_{\rm I} $ at stage $ 1 $.}
		\label{fr}
	\end{figure}
	
	The result of this splitting serves to set up four new subproblems, $ \min\psi^{2t}(\tilde{x},\tilde{y}),{\rm s.t.}  (X,Z,S)\in \mathcal{S}\cap\Omega^{2t} $ (denoted as $ \mathcal{P}^{2t} $) at stage $ 2 $, where $ \psi^{2t}(\tilde{x},\tilde{y})= \Vex_{\Omega^{2t}}(\tilde{x}^{T}\tilde{z} )+g(\tilde{y})$. Note that each of the subproblem $ \mathcal{P}^{2t} $ is feasible since $ (\tilde{x}^{1},\tilde{y}^{1})\in \mathcal{S}\cap\Omega^{2t}$. Moreover, by the construction of $ \psi^{2t} $, one has the minimum solution of the convex underestimator at stage $ 2 $ is larger than that at stage $ 1 $. More general, at stage $ k $, the convex problem denoted as Problem $ \mathcal{P}^{kt} $ follows
	\begin{equation}\label{e24}
	(\mathcal{P}^{kt} )\min\limits_{X,Z,S}	\psi^{kt}(\tilde{x},\tilde{y}), {\rm s.t.}  (X,Z,S)\in \mathcal{S}\cap\Omega^{kt},
	\end{equation}
	where $ \psi^{kt}(\tilde{x},\tilde{y})= \Vex_{\Omega^{kt}}(\tilde{x}^{T}\tilde{z} )+g(\tilde{y}) $. The optimal solution to $(\mathcal{P}^{kt} )  $ is $\upsilon^{kt}= \psi^{kt}(\tilde{x}^{kt},\tilde{y}^{kt}) $ and $ \Upsilon^{kt}=\phi(\tilde{x}^{kt},\tilde{y}^{kt}) $. 
	The decreasing upper bound at stage $ k $ is
	$\Upsilon^{k}\triangleq\min\{\phi: \text{all the visited optimal solution at stage  }k\}$,
	and may be expressed recursively as
	\begin{equation}\label{q44}
	\Upsilon^{k}=\min\{\Upsilon^{k-1},\Upsilon^{k1},\Upsilon^{k2},\Upsilon^{k3},\Upsilon^{k4}\}\leq\Upsilon^{k-1} .	
	\end{equation} 
	If $ \upsilon^{kt}> \Upsilon^{k}$, the subspace will be eliminated from the further consideration. Therefore, we only record the boundary $ \Omega^{kt} $ and the optimal solution $ (\tilde{x}^{kt},\tilde{y}^{kt}) $ as an open node $ (k,t) $ if $ \upsilon^{kt}\leq \Upsilon^{k}$. The increasing lower bound at stage $ k $ is
	\begin{equation}\label{q27}
	\nu^{k}\triangleq\min\{\nu^{lj}: \text{ node } (l,j) \text{ is open at stage }k\}.
	\end{equation}
	
	Moving from stage $ k $ to stage $ k+1 $ involves the selection of an open node (lines $ 1-3 $), and the creation of four new nodes from that node (lines $ 4-9 $) as shown in Algorithm \ref{al2}.
	\begin{algorithm}[t]
		\caption{Moving from stage $ k $ to stage $ k+1 $}	
		\begin{algorithmic}[1]
			\STATE Select an open node whose lower bound $ \nu^{lj}=\nu^{k} $;	
			\STATE $ \Omega\leftarrow \Omega^{lj} $, $ (\tilde{x}^{k},\tilde{z}^{k}) \leftarrow(\tilde{x}^{lj},G\tilde{y}^{lj})$;
			\STATE Erase $ (l,j)  $ from the open node storage;
			\FOR{$ i=1:n^2 $} 		
			\STATE $ 	\Delta_{i}^{k}\leftarrow\left[ \tilde{x}^{k}_{i}\tilde{z}^{k}_{i}-\Vex_{\Omega_{i}}\left( \tilde{x}^{k}_{i}\tilde{z}^{k}_{i} \right) \right]  $;
			\ENDFOR
			\algorithmiccomment{\% We only need to compare $ n(n+1)/2 $ elements if we use half-vectorization}
			\STATE $ {\rm I}\leftarrow\arg\max\limits_{i}\Delta_{i}^{k} $, $ k\leftarrow k+1 $;
			\STATE Initialization: $ \Omega^{k1},\Omega^{k2}, \Omega^{k3},\Omega^{k4}\leftarrow\Omega$;
			\STATE Update $  \Omega_{\rm I}^{k1},\Omega_{\rm I}^{k2}, \Omega_{\rm I}^{k3},\Omega_{\rm I}^{k4} $ by
			\begin{equation*}
			\begin{split}			
			(j^{k1}_{\I}, J^{k1}_{\I},n_{\I}^{k1},N_{\I}^{k1})\leftarrow&(j_{\I},\tilde{x}_{\I}^{k},n_{\I},\tilde{z}_{\I}^{k}),\\
			(j^{k2}_{\I}, J^{k2}_{\I},n_{\I}^{k2},N_{\I}^{k2})\leftarrow&(\tilde{x}_{\I}^{k},J_{I},n_{\I},\tilde{z}_{\I}^{k}),\\
			(j^{k3}_{\I}, J^{k3}_{\I},n_{\I}^{k3},N_{\I}^{k3})\leftarrow&(\tilde{x}_{\I}^{k},J_{\I},\tilde{z}_{\I}^{k},N_{\I}),\\
			(j^{k4}_{\I}, J^{k4}_{\I},n_{\I}^{k4},N_{\I}^{k4})\leftarrow&(j_{\I},\tilde{x}_{\I}^{k},\tilde{z}_{\I}^{k},N_{\I}).	
			\end{split}
			\end{equation*}	
		\end{algorithmic}
		\label{al2}
	\end{algorithm}
	
	Since the procedure converges to a globally optimal solution~\cite{al1983jointly}, once we have any $ \upsilon^{k}=\Upsilon^{k} $, the optimal solution is obtained as $ (\tilde{x}^{k},\tilde{y}^{k})  $. Due to the computational consideration, the algorithm can be terminated at a prespecified $ \epsilon $ degree of accuracy whenever $ 	\upsilon^{k}\geq\Upsilon^{k}-\epsilon $. 
	The algorithm is summarized in Algorithm \ref{ag1}. 
	
	\begin{algorithm}[t]
		\caption{$ \epsilon-$accuracy B\&B Algorithm}
		\textbf{Input:} $ A,C,Q,R,W,U,\Omega $; \\
		\textbf{Output:} $ X^{*},Z^{*},\upsilon^{*}$
		\begin{algorithmic}[1]		
			\STATE  Initialize: $ \Omega^{1}\leftarrow\Omega $, $ k\leftarrow1 $;;
			\STATE Solve the convex problem  $ (\mathcal{P}^{1}) $  and obtain the optimal solution $ (\tilde{x}^{1},\tilde{y}^{1}) $ with the lower bound $ \upsilon_{1} $ and the upper bound $ \Upsilon^{1} $;
			\WHILE {$ 	\upsilon^{k}<\Upsilon^{k}-\epsilon$}		
			\STATE Move from stage $ k $ to stage $ k+1 $ by Algorithm \ref{al2};		
			\FOR{$ t=1:4  $}
			\STATE Solve each of the four problem $ (\mathcal{P}^{kt})  $ in turn and obtain the point $ (\tilde{x}^{kt},\tilde{y}^{kt}) $ with the value $ \upsilon^{kt} $ and $ \Upsilon^{kt} $;		
			\item For $ \upsilon^{kt}\leq \Upsilon^{k} $, record it as an open node $ (k,t) $;
			\ENDFOR
			\STATE Update $ \Upsilon^{k} $ by \eqref{q44} and $\upsilon^{k} $ by \eqref{q27};		
			\ENDWHILE			
			\label{code:recentEnd}
			\STATE Let $ v^{*}=\upsilon^{k} $, $\ve(X^{*})=\tilde{x}^{kt} $ and $\ve(Z^{*})=\tilde{y}^{kt}$.
		\end{algorithmic}
		\label{ag1}
	\end{algorithm}
	By the same method in the open-loop case, the optimality gap 
	\begin{equation*}
	\begin{split}
	\kappa=& \bar{\gamma}^{*}-\bar{\gamma}^{opt}\\
	=&\lambda\left( \big(1+\upsilon^{*} )^{-\frac{1}{2}}-\left( \det\left( I+(CX^{*} C^{\top}+R)Z^{*}\right) \right) ^{-\frac{1}{2}}\right).
	\end{split}	
	\end{equation*}
	
	\section{simulation} \label{s7}	
	\subsection{Policy Assessment}\label{c1}
	We consider a scalar stable system with parameters $ A =0.8,C=1,Q=1$, $R=1$ and   $ \lambda= 0.8 $. We compare our stochastic event-triggered schedulers with two other offline schedulers as follows.
	\begin{enumerate}
		\item Random offline scheduler: The sensor transmits packets with probability    $ \frac{\gamma}{\lambda }$ at each time step in random scheduling;
		\item Periodic offline scheduler: The sensor sends the data using the optimal offline periodic scheduling~\cite{shi2011sensor} with rate $  \frac{\gamma}{\lambda } $.
	\end{enumerate}  
	We adopt the Monte Carlo method with 150000 independent iterations to calculate the mean estimation error covariance, which is shown in Fig. \ref{f3}. The stochastic event-triggered policies proposed in our work not only outperform the random offline scheduler, but also reduce the mean error covariance compared to the optimal offline periodic scheduler, especially when the communication rate is not sufficient to allow the persistent data transmissions transmit packets. Moreover, the closed-loop scheduler is better than the open-loop scheduler especially for $ \gamma\in [0.1,0.4] $ in this case.
	\begin{figure}[!t]
		\centering
		\includegraphics[width=0.9\columnwidth]{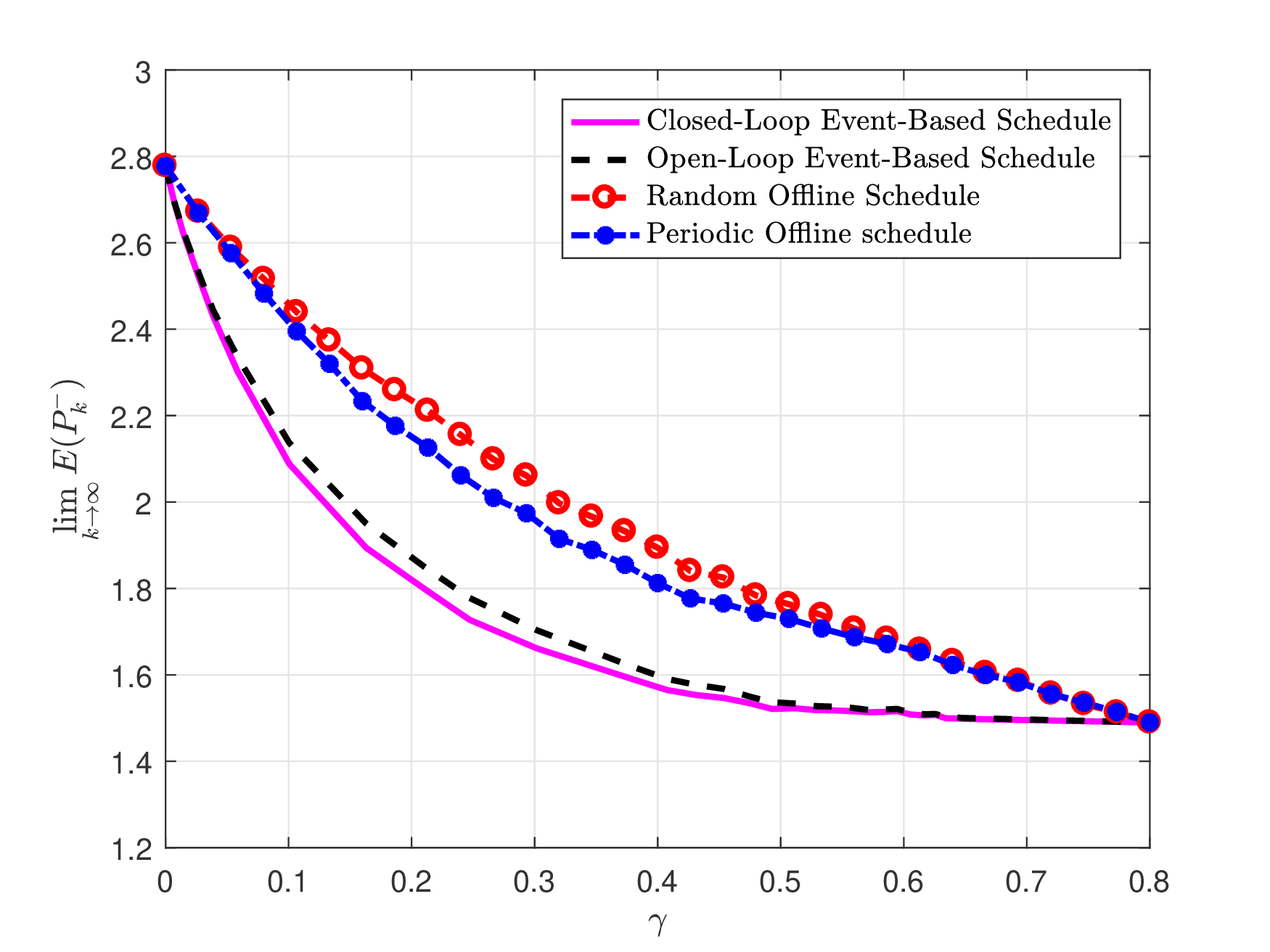}
		\caption{Empirical mean error covariance under three scheduling strategies versus effective communication rate.}
		\label{f3}
	\end{figure}

	\subsection{Performance Bounds}\label{c2}
	We consider the network availability rate $ \lambda=0.8 $. Fig \ref{f4} demonstrate the asymptotic bounds of mean error covariance in Lemma \ref{l2} for a stable system with parameters (same parameters as \cite{han2015stochastic} for comparison)
	\begin{align*}
	A=\left[ \begin{array}{cc}
	0.8& 1 \\ 
	0& 0.95
	\end{array}  \right] ,C=\left[ \begin{array}{cc}
	1& 1
	\end{array}  \right] ,Q=\left[ \begin{array}{cc}
	1& 0 \\ 
	0& 1
	\end{array}  \right],
	R=1 ,
	\end{align*} using an open-loop scheduler by 60000 simulation runs.
	\begin{figure}[!t]
		\centering
		\includegraphics[width=0.9\columnwidth]{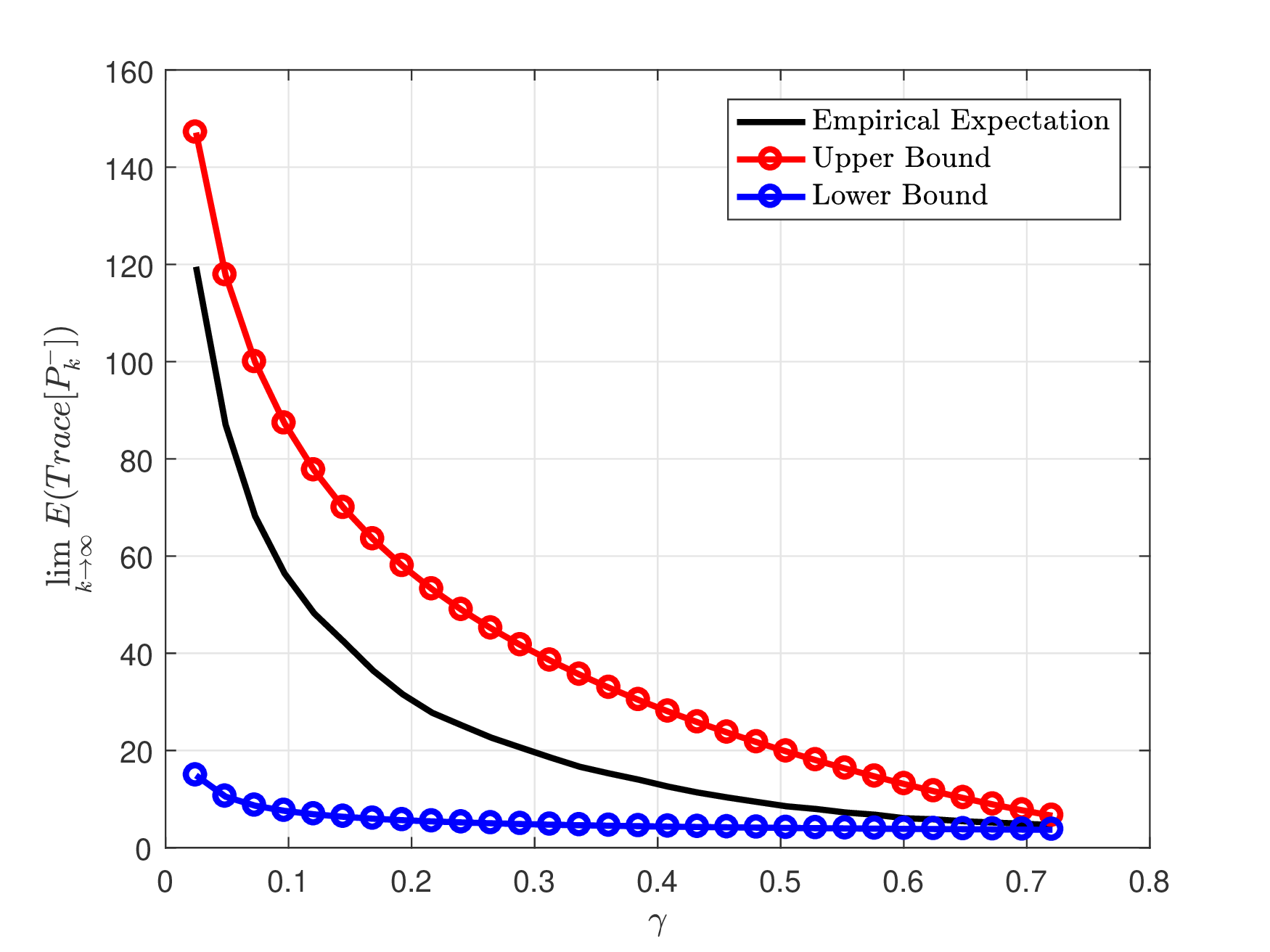}		
		\caption{Trace of the asymptotic upper bound $ \bar{X}_{ol} $ and the lower bound $ \underline{X}_{ol} $ of the open-loop scheduler versus empirical data.}
		\label{f4}
	\end{figure}
	We observe that when the communication rate $ \gamma $ is closer to $ \lambda $, the traces of the bounds for both cases are tighter. Similar results exist for an unstable system under the closed-loop scheduler.

	\subsection{Design of Event-triggered Parameter} \label{c3}	
	We assume $ \lambda=0.8 $. To compare this result with the schedulers proposed in~\cite{han2015stochastic}, we use the same system parameters 
	\begin{align*}
	&A=\left[ \begin{array}{cc}
	0.8& 1 \\ 
	0& 0.95
	\end{array}  \right] ,C=\left[ \begin{array}{cc}
	0.5& 0.3\\
	0 &1.4
	\end{array}  \right] ,\\
	&Q=\left[ \begin{array}{cc}
	1& 0 \\ 
	0& 1
	\end{array}  \right] ,R=\left[ \begin{array}{cc}
	1& 0 \\ 
	0& 1
	\end{array} \right] ,
	\end{align*} with the open-loop scheduler. Note that
	\begin{equation*}
	X_{0}=\left[ \begin{array}{cc}
	2.4353& 0.3976 \\ 
	0.3976& 1.3756
	\end{array} \right].
	\end{equation*}
	
	\begin{figure}[!t]
		\centering		
		\includegraphics[width=0.9\columnwidth]{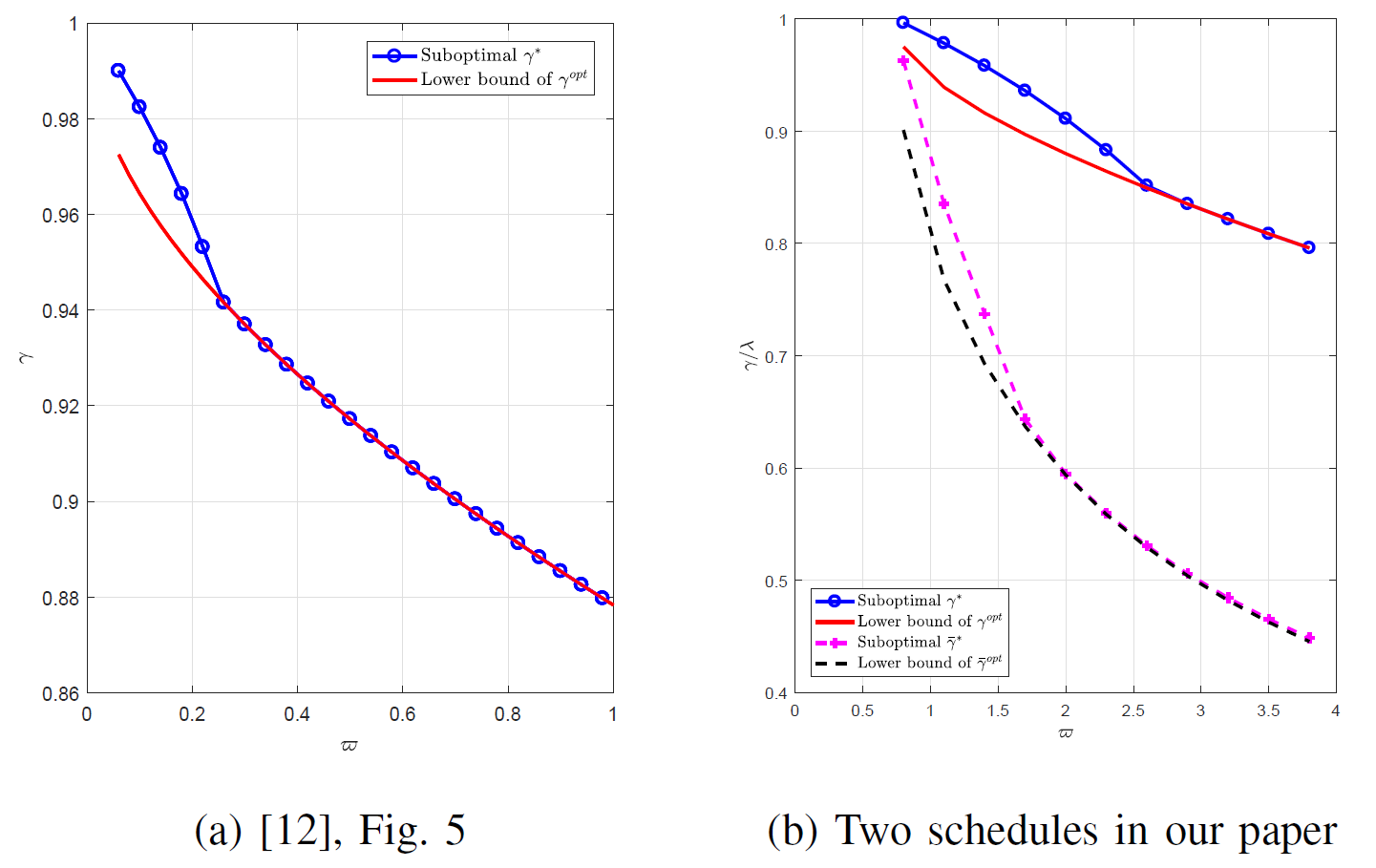}
		\caption{Comparison of the trade-off between the system performance quality and the communication rate.}
		\label{f6}			
	\end{figure}	
	The system quality constraint is
	$ 	M=X_{0}+\varpi I $,
	where $ \varpi $ is a positive real number. The suboptimal solution in Theorem \ref{t2} is obtained under different values of $ \varpi $, and it is shown in Fig. \ref{f6}(b) by the blue line. The same as in~\cite[Fig. 5]{han2015stochastic},  shown in Fig. \ref{f6}(a), the suboptimal solution equals the optimal solution when $ \varpi  $ is large, though the equivalent point of $ \varpi  $ is larger than that in~\cite{han2015stochastic}. Given the same quality constraint, the percentage of triggering the scheduling when the network is available is higher in our paper than that in \cite{han2015stochastic}. This is consistent with the fact that due to the induced uncertainty of the network access, more information is needed to guarantee the same estimation quality.
	
	Moreover, the suboptimal solution which follows Theorem \ref{t4} using a B\&B algorithm is shown by purple dashed line in Fig. \ref{f6}(b). We observe that to achieve the same estimation  quality, the upper bound of the communication rate using the closed-loop scheduler is much smaller than the communication rate using the open-loop scheduler. The suboptimal solution is also equivalent to the optimal solution when  $ \varpi  $ is large.This scenario has not been addressed in~\cite{han2015stochastic}.

	\subsection{Comparison between Different Access Probabilities $ \lambda $} \label{s7.4}
	In this subsection, we illustrate the scheduling performance by varying $ \lambda $. A scalar stable system with parameters $ A =0.8,C=1,Q=1$ and $R=1$ by an open-loop scheduler is considered. We adopt the Monte Carlo method using 50 independent sample paths with 3000 time steps each to calculate the mean error covariance. The results are shown in Fig. \ref{f7}. Two lower bounds are also plotted:
	\begin{enumerate}
		\item The lower bound (black filled dots) from Lemma \ref{l2} with $ \gamma=\lambda $, i.e., $ X_{0}$ \eqref{e15}.
		\item The lower bound (red hollow dots) in~\cite{sinopoli2004kalman}, i.e., $ X_{p}$  \eqref{eq10} .
	\end{enumerate} 
	
	The black dots in Fig. \ref{f7} are closer to the empirical results for fixed $ \lambda $ compared with the red dots when $ \lambda=0.4,0.6,0.8,1 $. If $ \lambda=0.2 $, the red dot is better than the black one. This coincides with the result that the lower bound derived in our paper is larger than the previous one in \cite{sinopoli2004kalman} especially when $ \lambda $ is large.
	\begin{figure}[!t]
		\centering
		\includegraphics[width=0.9\columnwidth]{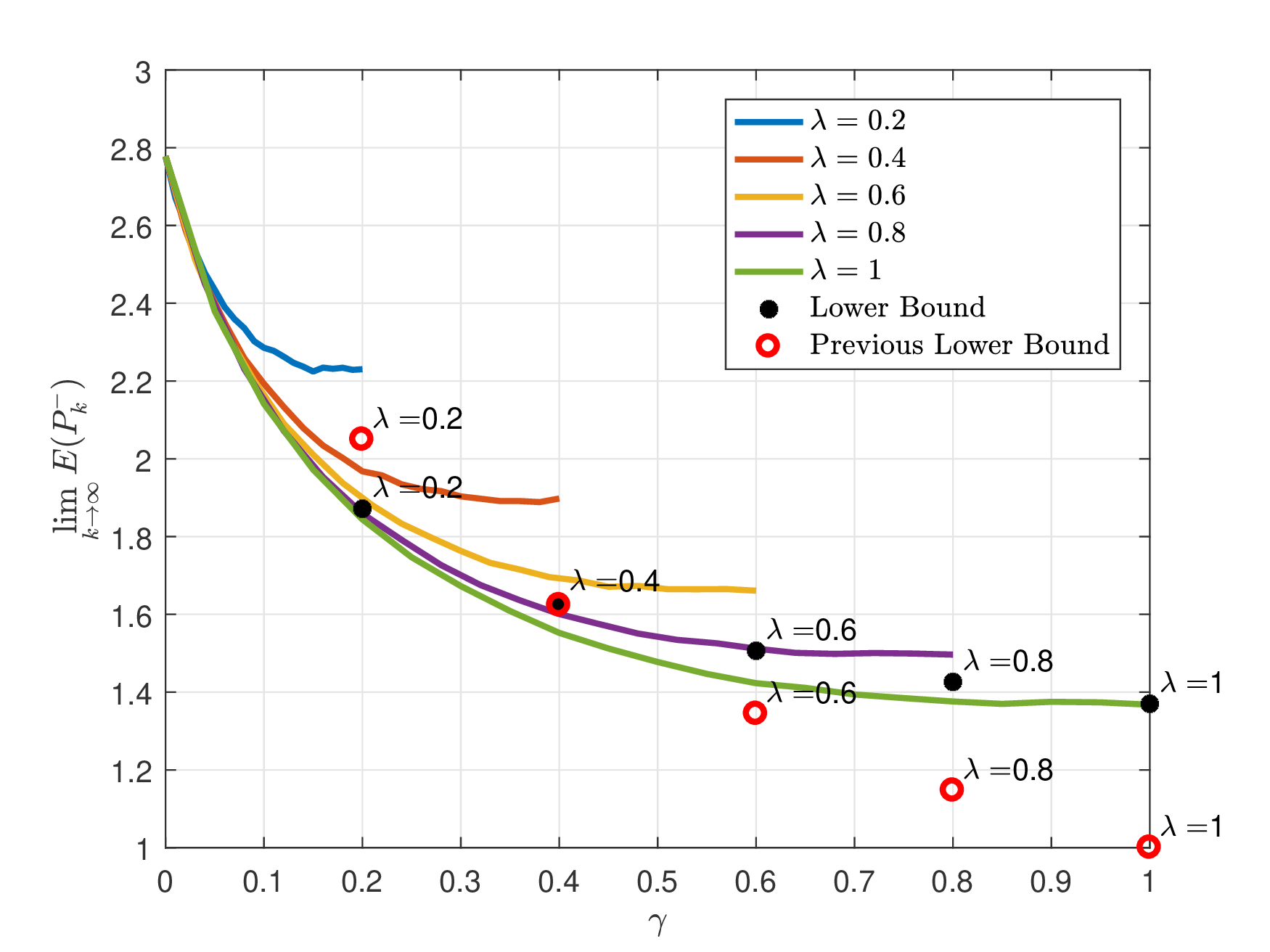}	
		\caption{Empirical mean error covariance under different network access probabilities and lower bounds comparision.}
		\label{f7}
	\end{figure}
	The unstable system under closed-loop scheduler has similar results.

	\section{conclusion}\label{sc}
	In this work, we developed stochastic event-triggered schedules for remote estimation in which the network access is uncertain. We started from the formulation of a discrete-time remote estimation process in the CRSN, and then analyzed the hidden information contained in the absence of data transmission. In order to achieve a better tradeoff between estimation performance and communication consumption, we proposed both open-loop and closed-loop schedules. Utilizing the hidden information, the MMSE estimators for both schedules were derived. The problem of minimizing the average communication rate while upholding a level of quality was studied. We proposed a suboptimal expression to design event parameter in the open-loop scenario by solving an SDP problem. Since the closed-form  of the communication rate cannot be obtained in the closed-loop scenario, a jointly biconvex problem is used to minimize the upper bound of the communication rate satisfying the quality constraint; the related global optimal boundary solution is obtained by B\&B algorithm.  Numerical examples were  provided to illustrate our results. Future work includes safety issues and multiple sensor schedulings in this system structure. It is also interesting to consider other network channel models, e.g., multi-state Markov chain.

	\appendix
	\subsection{ Propositions of function $ g_{\theta,W}  $} \label{a2}
	We first prove some useful properties of the matrix function $ 	g_{\theta,W}(X) $. 
	
	\begin{proposition}\label{p1}
		For all $ X_{1},X_{2}\in \mathbb{S}_{+}^{n} $, we have the following properties of $ g_{\theta,W} $:
		\begin{enumerate}		
			\item \emph{Monotonicity:} If $ X_{1}\geq X_{2} $, then $ g_{\theta,W}(X_{1})\geq g_{\theta,W}( X_{2})\geq Q$;
			\item \emph{Existence and uniqueness of a fixed point:} There exists a unique positive-definite $ X_{\theta,W}^{*} $ such that $ X_{*}=g_{\theta,W}(X_{*}) $;
			\item \emph{Limit property of the iterated function:} $ g_{\theta,W}^{k}(X)\rightarrow X_{\theta,W}^{*} $, for any $ X\in \mathbb{S}_{+}^{n} $ as $ k\rightarrow\infty $;
			\item \emph{Concavity:} For $ \forall\alpha\in[0,1] $, $ g_{\theta,W}(\alpha X_{1}+(1-\alpha)X_{2})\geq \alpha g_{\theta,W}(X_{1})+(1-\alpha)g_{\theta,W}(X_{2}) $. Therefore, by Jensen's inequality, one has $ \mathbb{E}(g_{\theta,W}(X))\leq g_{\theta,W}[\mathbb{E}(X)] $;	
			\item \emph{Monotonicity on $ W $:} For any $ W_{1}\geq W_{2} $, $  g_{\theta,W_{1}}(X)\geq g_{\theta,W_{2}}(X) $, $ X_{W_{1}}^{*}\geq X_{W_{2}}^{*} $.
		\end{enumerate}
	\end{proposition}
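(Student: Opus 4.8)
The plan is to hang all five properties on a single \emph{variational representation} of $g_{\theta,W}$ as a pointwise minimum over gain matrices. For a gain $K$, define the map
\[
F_K(X)\triangleq AXA^{\top}+Q-\theta\bigl(KCXA^{\top}+AXC^{\top}K^{\top}\bigr)+\theta K(CXC^{\top}+W)K^{\top}.
\]
A completion-of-squares in $K$ (with $S=CXC^{\top}+W$, $L=AXC^{\top}$) shows $F_K(X)=g_{\theta,W}(X)+\theta(K-LS^{-1})S(K-LS^{-1})^{\top}$, so $\min_K F_K(X)=g_{\theta,W}(X)$, attained at $K^{\star}=AXC^{\top}(CXC^{\top}+W)^{-1}$. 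Property~4 is then immediate: each $F_K$ is affine in $X$, so $g_{\theta,W}=\min_K F_K$ is a pointwise infimum of affine maps and hence concave; Jensen's inequality gives $\mathbb{E}[g_{\theta,W}(X)]\leq g_{\theta,W}(\mathbb{E}[X])$.

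For property~1 I would first show each $F_K$ is monotone. Writing $D=X_1-X_2\geq0$, a short computation yields
\[
F_K(X_1)-F_K(X_2)=(A-\theta KC)D(A-\theta KC)^{\top}+\theta(1-\theta)KCDC^{\top}K^{\top}\geq0,
\]
where nonnegativity of the second term uses exactly $\theta\in(0,1]$. Monotonicity of $g_{\theta,W}$ follows because $F_K(X_1)\geq F_K(X_2)$ for every $K$ is preserved under taking the minimum over $K$. The lower bound $g_{\theta,W}(X)\geq Q$ comes from $\theta\leq1$ together with the Woodbury identity $X-XC^{\top}(CXC^{\top}+W)^{-1}CX=(X^{-1}+C^{\top}W^{-1}C)^{-1}\geq0$, which forces $AXA^{\top}-\theta AXC^{\top}(CXC^{\top}+W)^{-1}CXA^{\top}\geq0$, leaving only $Q$. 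Property~5 is the most elementary: matrix inversion is antitone on the positive-definite cone, so $W_1\geq W_2$ gives $(CXC^{\top}+W_1)^{-1}\leq(CXC^{\top}+W_2)^{-1}$, and the minus sign in $g_{\theta,W}$ reverses this into $g_{\theta,W_1}(X)\geq g_{\theta,W_2}(X)$; the fixed-point ordering $X_{W_1}^{*}\geq X_{W_2}^{*}$ then follows by starting the (monotone) $g_{\theta,W_1}$-iteration at $X_{W_2}^{*}$, observing $g_{\theta,W_1}(X_{W_2}^{*})\geq X_{W_2}^{*}$, and invoking the limit property.

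The substantive work, and the expected main obstacle, lies in properties~2 and~3. For existence I would iterate from $X_0=0$: since $g_{\theta,W}(0)=Q\geq0=X_0$, monotonicity makes $\{g_{\theta,W}^{k}(0)\}$ nondecreasing, and under the standing convergence assumption invoked in Section~\ref{s2} (the $\lambda>1-1/\rho(A)^{2}$-type condition, which lets me dominate the iterates by a solution of a standard Lyapunov/Riccati super-solution) the sequence is bounded above, hence converges by continuity to a fixed point. Uniqueness and the global limit property for arbitrary $X\geq0$ are delicate, since monotonicity alone does not suffice; I would exploit concavity via strict subhomogeneity, $g_{\theta,W}(\alpha X)\geq\alpha g_{\theta,W}(X)+(1-\alpha)Q$ for $\alpha\in(0,1)$, to run a squeeze/contraction argument in the spirit of the fixed-point analysis in~\cite{sinopoli2004kalman}: detectability of $(A,C)$ traps the iteration from any $X\geq0$ between $g_{\theta,W}^{k}(0)$ and $g_{\theta,W}^{k}(\bar X)$ for a suitable super-solution $\bar X\geq X$, both of which converge to the same limit, forcing uniqueness and global convergence. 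I expect constructing the super-solution $\bar X$ and pushing the subhomogeneity argument through with only $Q\geq0$ (rather than $Q>0$) to be the technical crux.
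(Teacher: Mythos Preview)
Your proposal is correct. The variational representation $g_{\theta,W}(X)=\min_K F_K(X)$ is the standard device, and your derivations of monotonicity, concavity, the lower bound $Q$, and monotonicity in $W$ from it are all sound; your argument for property~5 is essentially identical to the paper's.

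The only real difference is one of scope: the paper does not prove properties~1--4 at all but simply cites \cite{sinopoli2004kalman}, whereas you reconstruct the argument. What you identify as the ``main obstacle'' (existence/uniqueness of the fixed point and global convergence) is exactly the part the paper delegates to that reference; your sketch of the monotone-iteration/subhomogeneity route is the right outline, and your caveat about needing a super-solution under only $Q\geq0$ is apt, but for the purposes of this paper a citation suffices and no further work is needed.
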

	\begin{proof}
		1)- 4) are proved in~\cite{sinopoli2004kalman} in detail.	
		
		5) For any $ W_{1}\geq W_{2} $, $ CXC^{\top}+W_{1}\geq CXC^{\top}+W_{2}\geq 0$, $ (CXC^{\top}+W_{1})^{-1}\leq( CXC^{\top}+W_{2})^{-1} $, $  \theta AXC^{\top}(CXC^{\top}+W_{1})^{-1}CXA^{\top}\leq \theta AXC^{\top}(CXC^{\top}+W_{2})^{-1}CXA^{\top}   $ and $  g_{\theta,W_{1}}(X)\geq g_{\theta,W_{2}}(X) $. The second equation holds because of the monotonicity property of the iterated function.
		
	\end{proof}
	
	\subsection{ Proof of Theorem \ref{t1} } \label{a1}
	Since the process $ x_{0} $ has a prior Gaussian distribution, i.e., $ x_{0}\sim\mathcal{N}(0,\Sigma) $, one can prove the MMSE estimate in a recursive way. Assume $ x_{k} $ has a prior Gaussian distribution as $ x_{k}\sim\mathcal{N}(x_{k}^{-},P_{k}^{-}) $. We need to prove the estimation update of $ x_{k} $ based on the new update $ \eta_{k} $, $ \varepsilon_{k} $ and $ \eta_{k}\varepsilon_{k}y_{k} $ are shown in \eqref{eq7}. Furthermore, the next estimation of $ x_{k+1} $ conditioned on the information set $ \mathcal{I}_{k} $ is shown in \eqref{q6}.
	
	Consider the following two cases:
	\begin{enumerate}
		\item When $ \eta_{k}=0 $,
		\begin{equation*}
		\begin{split}
		\hat{x}_{k}^{-}=\hat{x}_{k}^{-},
		P_{k}		=	P_{k}^{-}.
		\end{split}  					
		\end{equation*}
		\item When $ \eta_{k}=1 $,	from~\cite{han2015stochastic}'s Theorem 1,
		\begin{equation*}
		\begin{split} 	
		K_k'&= P_{k}^{-}C^{\top}(CP_{k}^{-}C^{\top} + R+(1-\varepsilon_{k})Y^{-1})^{-1},\\
		\hat{x}_{k}	& =(I-K_{k}'C)\hat{x}_{k}^{-}+\varepsilon_{k}K_k'y_{k}, 	\\
		P_{k}			&=	P_{k}^{-} -K_k'	C P_{k}^{-}.			 	
		\end{split}
		\end{equation*}		
	\end{enumerate}	
	Hence, from the above analysis, the above recursive equations are satisfied, where $ K_{k}=\eta_{k}K_{k}'$. This completes the measurement update proof.
	
	Then we consider the pdf of the time update. It is a Gaussian process  $ f(x_{k+1}|\mathcal{I}_{k})=f(Ax_{k}+w_{k}|\mathcal{I}_{k})=\mathcal{N}(A\hat{x}_{k},AP_{k}A^{\top}+Q) $, which is directly derived given that $ x_{k} $ and $ w_{k} $ are mutually independent Gaussian. This is the same as equations in \eqref{q6}. Thus, the proof is completed.

	\subsection{Proof of Lemma \ref{l2}} \label{a3} 
	
	We prove Lemma \ref{l2} by induction. For simplicity, denote $ U_{k}\triangleq g_{\lambda,R+Y^{-1}}^{k}(\Sigma) $.
	
	Clearly, $ \mathbb{E}[P_{0}^{-}]= U_{0}=\Sigma$. Assume $ \mathbb{E}[P^{-}_{k}]\leq U_{k} $. Then the statement is equal to proving that $  \mathbb{E}[P^{-}_{k+1}]\leq U_{k+1} $ From equations \eqref{eq7} and \eqref{q6}, one obtains
	\begin{align*}		
	\mathbb{E}[P^{-}_{k+1}]= &\mathbb{E}[g_{\lambda,R+(1-\varepsilon_{k})Y^{-1}}(P^{-}_{k})]
	\leq \mathbb{E}[g_{\lambda,R+Y^{-1}}(P^{-}_{k})\\
	\leq&[g_{\lambda,R+Y^{-1}}\mathbb{E}[(P^{-}_{k})]
	\leq g_{\lambda,R+Y^{-1}}(U_{k})=U_{k+1},
	\end{align*}
	where the first inequality holds from the fifth statement in Proposition \ref{p1}. The second inequality holds due to the concavity of function $ g_{\theta,W} $ and the last inequality holds recalling that $ g_{\theta,W} $ is a monotonically increasing function.
	
	From the above analysis, $ \mathbb{E}[P^{-}_{k}]\leq U_{k} $ for all $ k $ by induction. Moreover,  by Proposition \ref{p1}, $  U_{k}\rightarrow\bar{X}_{ol} $, as $ k\rightarrow\infty $, which implies that
	\begin{equation*}
	\limsup\limits_{k\rightarrow\infty}\mathbb{E}[P_{k}^{-}]\leq\bar{X}_{ol}.
	\end{equation*}
	
	On the other hand, to derive the lower bound, let us define 
	\begin{equation*}
	S_{k}\triangleq P_{k}^{-1},S_{k}^{-}\triangleq (P_{k}^{-})^{-1}.
	\end{equation*}
	
	There are three cases of the recursive function of $ P_{k} $ \eqref{eq7} as follows
	\begin{equation}\label{e1}
	\begin{split}
	P_{k}=\left\lbrace \begin{array}{l}
	P_{k}^{-}, \text{ if }\eta_{k}=0,\\ 
	P_{k}^{-}-P_{k}^{-}C^{\top}(R+CP_{k}^{-}C^{\top})^{-1}CP_{k}^{-}, \text{ if }\eta_{k}, \varepsilon_{k}=1,\\ 
	P_{k}^{-}-P_{k}^{-}C^{\top}(R+Y^{-1}+CP_{k}^{-}C^{\top})^{-1}CP_{k}^{-},  \text{ else}.
	\end{array}  \right.
	\end{split}	 
	\end{equation}
	Inverting both side of \eqref{e1}, we have
	\begin{equation}\label{e2}
	\begin{split}
	S_{k}=\left\lbrace \begin{array}{l}
	S_{k}^{-}, \text{ if }\eta_{k}=0,\\ 
	S_{k}^{-}+C^{\top}R^{-1}C, \text{ if }\eta_{k}=1, \varepsilon_{k}=1,\\ 
	S_{k}^{-}+C^{\top}(R+Y^{-1})^{-1}C, \text{ if }\eta_{k}=1, \varepsilon_{k}=0.
	\end{array}  \right.
	\end{split}	 
	\end{equation}
	Aggregating \eqref{e2}, one has 
	\begin{equation*}
	S_{k}=S_{k}^{-}+\eta_{k}(1-\varepsilon_{k})C^{\top}(R+Y^{-1})^{-1}C+\eta_{k}\varepsilon_{k}C^{\top}R^{-1}C.
	\end{equation*}	
	Taking the expectation of both sides, one has
	\begin{equation*}
	\begin{split}
	\mathbb{E}[\eta_{k}(1-\varepsilon_{k})]=\Pr(\eta_{k}=1,\varepsilon_{k}=0)=&\lambda-\gamma,		\\	
	\mathbb{E}[\eta_{k}\varepsilon_{k}]=\Pr(\eta_{k}=1,\varepsilon_{k}=1)	=&\gamma.
	\end{split}	
	\end{equation*}  Thus, we obtain
	\begin{equation*}
	\mathbb{E}[S_{k}]=\mathbb{E}[S_{k}^{-}]+C^{\top}R_{1}^{-1}C.
	\end{equation*}
	
	Meanwhile, the third equation in \eqref{q6} is the same as 
	\begin{equation*}
	S_{k+1}^{-}=(AS_{k}^{-1}A^{\top}+Q)^{-1},
	\end{equation*}
	from which $ S_{k+1}^{-} $ is concave w.r.t. $ S_{k} $~\cite{yang2013schedule}. By Jensen's inequality, the following inequality holds:
	\begin{equation*}
	\mathbb{E}[S_{k+1}^{-}]\leq (A\mathbb{E}[S_{k}]^{-1}A^{\top}+Q)^{-1}=\Lambda_{R_{1}}(\mathbb{E}[S_{k}^{-}]),
	\end{equation*}
	where $ \Lambda_{W}(X)\triangleq[A(X+C^{\top}W^{-1}C)^{-1}A^{\top}+Q]^{-1} $, $ W>0 $ for simplicity. 
	
	For any $ X_{1}\geq X_{2}\geq 0 $, the following equation holds:
	\begin{align*}
	&[\Lambda_{W}(X_{1})]^{-1}=g_{W}(X_{1}^{-1}) \leq g_{W}(X_{2}^{-1})= [\Lambda_{W}(X_{2})]^{-1},\\
	&\Leftrightarrow \Lambda_{W}(X_{1})\geq\Lambda_{W}(X_{2}).
	\end{align*}
	Hence, the monotonicity of $ \Lambda_{W} $ is proven.
	
	Furthermore, $ [\Lambda_{W}(X^{-1})]^{-1}=g_{W}(X) $ holds by applying the matrix inversion lemma and the result directly follows
	\begin{equation*}
	[\Lambda_{W}^{k}(X^{-1})]^{-1}=g_{W}^{k}(X). 
	\end{equation*}
	Based on the monotonicity of $ \Lambda_{R_{1}} $ and $ S_{0}^{-}=\Sigma^{-1} $, we obtain
	\begin{equation*}
	\mathbb{E}[S_{k}^{-}]\leq \Lambda_{R_{1}}(\mathbb{E}[S_{k-1}^{-}]) \leq\cdots\leq \Lambda_{R_{1}}^{k}(\Sigma^{-1}).
	\end{equation*}
	Since $ f(X)=X^{-1},X\geq 0 $ is convex w.r.t $ X $, and by Jensen's inequality, one has
	\begin{equation*}
	\begin{split}
	\mathbb{E}[P_{k}^{-}]=\mathbb{E}[(S_{k}^{-})^{-1}]\geq \mathbb{E}[(S_{k}^{-})]^{-1}
	\geq [\Lambda_{R_{1}}^{k}(\Sigma^{-1})]^{-1}=g_{R_{1}}^{k}(\Sigma).
	\end{split}		
	\end{equation*}
	
	Denote $ D_{k}\triangleq g_{R_{1}}^{k}(\Sigma)$. From the above analysis, one has $ \mathbb{E}[P^{-}_{k}] \geq D_{k} $ for all $ k $. By Proposition \ref{p1}, $  D_{k}\rightarrow\underline{X}_{ol} $, as $ k\rightarrow\infty $, which implies that
	\begin{equation*}
	\liminf\limits_{k\rightarrow\infty}\mathbb{E}[P_{k}^{-}]\geq\underline{X}_{ol}.
	\end{equation*}
	From the fifth statement in Proposition \ref{p1}, $ \underline{X}_{ol}\geq X_{0} $ always holds as $ R_{1}\geq\dfrac{R}{\lambda} $.
	The proof is done.
	
	\subsection{ Proof of Theorem \ref{t2}} \label{a4}
	The proof of Theorem \ref{t2} follows the following two steps. 		
	First, we prove an equivalent set of constraints to replace the implicit constraint $ \bar{X}_{ol} \leq M $. Second, the set of constraints are transformed  to an SDP constraint.
	
	Firstly, the following two statements are equivalent:
	\begin{enumerate}
		\item $ \bar{X}_{ol} \leq M $,
		\item There exists $ 0<X\leq M $ such that $ g_{\lambda,R+Y^{-1}}(X)\leq X $.
	\end{enumerate} 
	$ ``1)\Rightarrow2) ''$: It is obvious that the second statement can be obtained from the first, i.e., $ \bar{X}_{ol} $ is a feasible solution to $ X $.\\	
	$ ``2)\Rightarrow1) ''$: Recall that $ g_{\lambda,W}(X) $ is a monotonically increasing function in $ X $ from Proposition \ref{p1}.  We have
	\begin{align*}
	M\geq X\geq g_{\lambda,R+Y^{-1}}(X)\geq g_{\lambda,R+Y^{-1}}^{2}(X)\\
	\geq \ldots \geq \lim\limits_{k\rightarrow\infty}g_{\lambda,R+Y^{-1}}^{k}(X)=\bar{X}_{ol}.
	\end{align*} 
	Then the first statement is obtained from the second. Thus, these two statements are equivalent. 
	
	The constraints of Problem \ref{q3} are rewritten as follows:
	\begin{align} \label{eq58}
	Y\geq0,
	0<X\leq M,
	g_{\lambda,R+Y^{-1}}(X)\leq X.
	\end{align} 
	
	Secondly, the main difficulty is to transform the last inequality into an equivalent SDP constraint. Since the last inequality cannot be changed to linear form based on $ X $, we transform it to linear form based on the inverse of X, i.e., $ S $. To maintain the parameter utility, the second inequality should also be changed to the linear form based on $ S $.	
	Taking the inverse of both sides of the second inequality in \eqref{eq58}, we obtain	
	$ S\geq M^{-1} $.
	It is straightforward to see that
	\begin{equation*}
	S\geq M^{-1}\Leftrightarrow\left[ \begin{array}{cc}
	S& I  \\ 
	I&  M
	\end{array} \right] \geq 0,
	\end{equation*}
	by Schur complement since $ S=X^{-1}>0 $.
	
	The left-hand part of the problem is to transform the third inequality in \eqref{eq58} to an SDP form in $ S $. By rearranging the term, one has
	\begin{equation}\label{eq22}
	\begin{split}
	&X-(1-\lambda)AXA^{\top}-Q\\
	&-\lambda(AXA^{\top}- AXC^{\top}(CXC^{\top}+R+Y^{-1})^{-1}CXA^{\top})\\
	&=X-(1-\lambda)AXA^{\top}-Q\\
	&-\lambda A(S+C^{\top}(R+Y^{-1})^{-1}C)^{-1}A^{\top} \geq0, 
	\end{split} 		
	\end{equation} 
	where the equality follows the matrix inversion lemma.
	
	Since $ S>0,Y\geq0,R>0 $, the following equation holds
	\begin{equation}\label{eq23}
	S+C^{\top}(R+Y^{-1})^{-1}C>0,
	\end{equation}		
	then by applying the Schur complement to \eqref{eq22}, the third inequality in \eqref{eq58} is the same as 
	\begin{equation}\label{eq44}
	\left[ \begin{array}{cc}
	X-(1-\lambda)AXA^{\top}-Q& \sqrt{\lambda}A \\ 
	\sqrt{\lambda}A^{\top} & S+C^{\top}(R+Y^{-1})^{-1}C
	\end{array} \right]\geq 0 .	
	\end{equation}

	We obtain $ X-Q-(1-\lambda)AXA^{\top}\geq0 $ from \eqref{eq44}. Meanwhile, as $ S=X^{-1}>0 $,  the following inequality holds: 
	\begin{equation}\label{eq45}
	\left[ \begin{array}{cc}
	X-Q& \sqrt{1-\lambda}A\\ 
	\sqrt{1-\lambda}A^{\top}&S
	\end{array} \right]\geq 0. 
	\end{equation}
	
	Given that $ X-Q\geq0 $ from \eqref{eq45} and $ X>0 $, it is straightforward to see that
	\begin{equation}\label{eq46}
	\left[ \begin{array}{cc}
	X& I\\ 
	I&	Q^{-1}
	\end{array} \right]\geq 0. 
	\end{equation}
	
	Combining \eqref{eq44}, \eqref{eq45} and \eqref{eq46}, one has	
	\begin{equation}
	\begin{split}
	&\Theta\triangleq\\
	&\left[ \begin{array}{cccc}
	X& \sqrt{\lambda}A &\sqrt{1-\lambda}A&I\\ 
	\sqrt{\lambda}A^{\top} & S+C^{\top}(R+Y^{-1})^{-1}C&0&0\\
	\sqrt{1-\lambda}A^{\top}&0&S&0\\
	I&0&0&Q^{-1}
	\end{array} \right]\\
	&\geq 0 .
	\end{split} 		
	\end{equation}
	
	This is equivalent to 
	\begin{equation}
	\begin{split}
	&\Gamma\triangleq\left[ \begin{array}{cccc}
	S& 0 & 0 &0\\ 
	0&  I& 0 & 0\\ 
	0& 0 & I & 0\\
	0& 0 & 0 & I
	\end{array} \right] \Theta\left[ \begin{array}{cccc}
	S& 0 & 0 &0\\ 
	0&  I& 0 & 0\\ 
	0& 0 & I & 0\\
	0& 0 & 0 & I
	\end{array} \right]\geq 0 \Leftrightarrow\\
	&\left[ \begin{array}{cccc}
	S& \sqrt{\lambda}SA &\sqrt{1-\lambda}SA&S\\ 
	\sqrt{\lambda}A^{\top}S & \Gamma_{22}&0&0\\
	\sqrt{1-\lambda}A^{\top}S&0&S&0\\
	S&0&0&Q^{-1}
	\end{array} \right]		\geq 0 ,
	\end{split}
	\end{equation}
	where $ \Gamma_{22}\triangleq S+C^{\top}(R+Y^{-1})^{-1}C $.  
	
	Since $ \Gamma_{22} $ is not linear in $ Y $,  we expand $ (R+Y^{-1})^{-1} $ by using the matrix inversion lemma, where
	\begin{equation*}
	(R+Y^{-1})^{-1}=R^{-1}-R^{-1}(Y+R^{-1})^{-1}R^{-1}.
	\end{equation*}
	Then one has
	\begin{equation}\label{eq76}
	\begin{split}
	\Gamma=&\left[ \begin{array}{cccc}
	S& \sqrt{\lambda}SA &\sqrt{1-\lambda}SA&S\\ 
	\sqrt{\lambda}A^{\top}S & S+C^{\top}R^{-1}C&0&0\\
	\sqrt{1-\lambda}A^{\top}S&0&S&0\\
	S&0&0&Q^{-1}
	\end{array} \right]\\
	&-\left[ \begin{array}{c}
	0\\ 
	C^{\top}R^{-1}\\ 
	0\\ 
	0
	\end{array} \right] (Y+R^{-1})^{-1}\left[ \begin{array}{c}
	0\\ 
	C^{\top}R^{-1}\\ 
	0\\ 
	0
	\end{array} \right]^{T}\geq0. 
	\end{split}		
	\end{equation}
	
	As $ Y\geq0,R>0 $, $ (Y+R^{-1})^{-1}>0 $ holds. The above inequality \eqref{eq76} can also be  viewed as a Schur complement, where
	\begin{equation*}
	\begin{split}
	\tilde{A}&\triangleq\left[ \begin{array}{cccc}
	S& \sqrt{\lambda}SA &\sqrt{1-\lambda}SA&S\\ 
	\sqrt{\lambda}A^{\top}S & S+C^{\top}R^{-1}C&0&0\\
	\sqrt{1-\lambda}A^{\top}S&0&S&0\\
	S&0&0&Q^{-1}
	\end{array} \right],\\
	\tilde{B}^{T}&\triangleq\left[ \begin{array}{cccc}
	0& 		R^{-1}C &		0&		0
	\end{array}\right], \tilde{C}\triangleq Y+R^{-1}>0.
	\end{split}	
	\end{equation*} 		
	Given that $ \tilde{C}>0 $, then $ \Gamma= \tilde{A}-\tilde{B}\tilde{C}^{-1}\tilde{B}^{T}\geq0 $ if and only if $ \Psi(S,Y)=\left[ \begin{array}{cc}
	\tilde{A}&  \tilde{B}\\ 
	\tilde{B}^{T}& \tilde{C}
	\end{array}\right] \geq 0  $ . The proof is done.
	
	\subsection{Proof of Theorem \ref{t3}}\label{at3}
	From Theorem \ref{t1}, one has
	\begin{equation*}
	\begin{split}
	f(z_{k}|\mathcal{I}_{k-1})=&f(y_{k}-\hat{y}_{k}^{-}|\mathcal{I}_{k-1})=f(Ce_{k}^{-}+v_{k}|\mathcal{I}_{k-1})\\
	=&\mathcal{N}(0,CP_{k}^{-}C^{\top}+R),
	\end{split}		
	\end{equation*}
	where the second equation holds as $ \hat{y}_{k}^{-}=C\hat{x}_{k}^{-}$ from Theorem \ref{t1}; the last equation holds as $ \mathbb{E}[e_{k}^{-}|\mathcal{I}_{k-1}]=0 $, and $ e_{k}^{-} $, $ v_{k} $ are mutually independent Gaussian variables.
	
	For the measurement update, performing a similar analysis, we can obtain \eqref{e33}. Note that, substituting $ y_{k} $ by $ z_{k} $ in \eqref{eq7},		
	$ \hat{x}_{k} =\hat{x}_{k}^{-}+\varepsilon_{k}K_kz_{k}-K_{k}\mathbb{E}[z_{k}|\mathcal{I}_{k-1}]=\hat{x}_{k}^{-}+\varepsilon_{k}K_kz_{k}  $,	
	which is consistent with \eqref{e33}. We omit the remainder proof as it is straight forward.
	
	\subsection{Proof of Lemma \ref{l4}}\label{a5}
	We have
	\begin{equation*}
	\begin{split}
	\gamma=&\mathbb{E}[\Pr(\eta_{k}\varepsilon_{k}=1|\mathcal{I}_{k-1})]\\
	=&\mathbb{E}[\Pr(\eta_{k}=1)\Pr(\zeta_{k}\leq exp(-\frac{1}{2}z_{k}^{T}Zz_{k})|\mathcal{I}_{k-1})] \\
	=&\lambda\mathbb{E}\left[ 1-\left( \text{det}\left( I+(CP_{k}^{-}C^{\top}+R)Z\right) \right)^{-\frac{1}{2}}\right] .
	\end{split}		
	\end{equation*}

	To prove the second inequality, it suffices to prove the concavity of \eqref{eq70}. By Jensen's inequality, it suffices to prove the convexity of function $ f(CXC^{\top}Z+RZ+I)\triangleq\left( \text{det}\left( I+(CP_{k}^{-}C^{\top}+R)Z\right) \right)^{-\frac{1}{2}} $.
	
	The convexity holds for a composition with affine functions; therefore, it is equivalent to prove that
	\begin{equation*}
	f(X)=[det(X)]^{-\frac{1}{2}}, \text{ for } X\geq I,
	\end{equation*}
	is convex.
	
	Define $ b:(0,\infty)\rightarrow \mathbb{R} $ and	$ b(s)\triangleq(s)^{-\frac{1}{2}} $.
	Lehmich el. al.~\cite{lehmich2014convexity} states the convexity of $ f $ on the set $ X\in \mathbb{S}_{++}^{n}$ is equivalent to
	\begin{equation*}
	nsb''(s)+(n-1)b'(s)\geq0 \text{, and } b'(s)\leq0 \text{, for all  } s>0.
	\end{equation*}
	Since 	$ 	b'(s)=-\frac{1}{2}(s)^{-\frac{3}{2}}\leq0 $		and 
	\begin{equation*}
	\begin{split}
	nsb''(s)+(n-1)b'(s)=&ns\frac{3}{4}(s)^{-\frac{5}{2}}-(n-1)\frac{1}{2}(s)^{-\frac{3}{2}}\\
	=&\frac{3}{4}(s)^{-\frac{3}{2}}+(n-1)\frac{1}{4}(s)^{-\frac{3}{2}}\geq0,
	\end{split}		  
	\end{equation*} 
	the proof is completed.
	
	\subsection{Proof of Theorem \ref{t4}}\label{a6}
	An equivalent statement of the constraints of Problem \ref{p4} is as follows.	
	The following two statements are equivalent:
	\begin{enumerate}
		\item $\bar{X}_{cl}\leq M ,C\bar{X}_{cl} C^{\top}+R\leq T$,
		\item There exists $  S^{-1}=X\geq M $ such that
		\begin{equation*}
		\begin{split}
		\Psi(S,Z) \geq 0, CX C^{\top}+R\leq T.
		\end{split}			
		\end{equation*} 
	\end{enumerate}
	
	$ 1)\Rightarrow2) $: Let $ X $ be equal to $ \bar{X}_{cl} $; it is obvious that $ \bar{X}_{cl} $ is a feasible matrix satisfying the second statement.
	
	$ 2)\Rightarrow1) $: From the similar proof in Appendix \ref{a4}, one has 
	\begin{align*}
	g_{\lambda,R+Z^{-1}}(X)\leq X,\bar{X}_{cl}\leq X\leq M;
	\end{align*}
	therefore, $  C\bar{X}_{cl} C^{\top}+R\leq CX C^{\top}+R\leq T$. 
	
	On the other hand, it is well known that replacing $ S^{-1}=X $ by $ S^{-1}\leq X $ does not affect the solution to the optimization problem \eqref{e44} since $ {\rm tr}\left((CX C^{\top}+R)Z\right)\geq{\rm tr}\left((CS^{-1} C^{\top}+R)Z\right) $. Therefore, $ S^{-1}=X $ is satisfied for at least one optimal solution to the optimization problem, which completes the proof.
	
	\subsection{Proof of Lemma \ref{ll7}} \label{a7}
	By Lemma \ref{l5} and Lemma \ref{l7}, the feasibility condition for the problem is that $ X_{0}\leq\bar{X}_{cl} $. Moreover, from the proof of Theorem \ref{t4}, one has $ X_{0}\leq\bar{X}_{cl}\leq X\leq M $. It is sufficient to obtain that $ l_{ii}\leq x_{ii}\leq m_{ii} $. Furthermore, since every principal sub-matrix is positive definite for a positive semidefinite matrix, we have $ |x_{ij}|\leq(x_{ii}x_{jj})^{\frac{1}{2}}\leq (m_{ii}m_{jj})^{\frac{1}{2}}$.
	
	As the objective function in equation \eqref{e44} satisfies
	\begin{align*}
	&\min\limits_{Z,X,S}\,  {\rm tr}\left((CX C^{\top}+R)Z\right)\leq\min\limits_{Z,X,S}\,  {\rm tr}(Z){\rm tr}(CX C^{\top}+R)\\
	&\leq {\rm tr}(Z)\min\limits_{X,S}\,{\rm tr}(CX C^{\top}+R) \leq z^{*} \min\limits_{X,S}\,{\rm tr}(CX C^{\top}+R),
	\end{align*}
	the following equation holds
	\begin{equation}\label{eq39}
	z^{*}\geq\dfrac{\min\limits_{Z,X,S}\,  {\rm tr}\left((CX C^{\top}+R)Z\right)}{\min\limits_{X,S}\,{\rm tr}(CX C^{\top}+R)}, {\rm s.t.} (X,Z,S)\in\mathcal{S},
	\end{equation}
	where $ z^{*}\geq {\rm tr}(Z)$ is the upper bound of $ {\rm tr}(Z) $ on~$ \mathcal{S}$. As the molecule  $ \min\limits_{Z,X,S}\,  {\rm tr}\left((CX C^{\top}+R)Z\right)\leq \min\limits_{Z,S}\,{\rm tr}\left((CM C^{\top}+R)Z\right) $ and the denominator  $ \min\limits_{X,S}\,{\rm tr}(CX C^{\top}+R)\geq {\rm tr}(CX_{0} C^{\top}+R)$, let $ z^{*}$ be the solution to \eqref{e38}, and we can prove that $ z^{*} $ satisfies \eqref{eq39} from the above analysis. Therefore, one has $ 0\leq z_{ii}\leq {\rm tr}(Z)\leq z^{*}$. Moreover, $  |z_{ij}|\leq(z_{ii}z_{jj})^{\frac{1}{2}}\leq z^{*}$. This completes the proof.

\end{document}